\documentclass[a4paper,UKenglish]{lipics-v2018}


\usepackage{microtype}
\usepackage{algorithm}
\usepackage[noend]{algorithmic}
\usepackage{xspace}
\usepackage{wrapfig}

\newcommand{\E}{\ensuremath{\mathsf{E}}}

\newcommand{\eps}{\varepsilon}

\newcommand{\R}{\mathbb{R}}

\newcommand{\omt}[1]{}
\newcommand{\etal}{\emph{et al.}\xspace}



\renewcommand{\c}[1]{\ensuremath{\mathcal{#1}}}


\newcommand{\pchan}{\textsf{ComputePartition-Chan}\xspace}
\newcommand{\pchanF}{\textsf{ComputePartition-Chan-Simple}\xspace}
\newcommand{\pmat}{\textsf{ComputePartition-Mat}\xspace}

\newcommand{\Akd}{\textsf{Ham Tree Sample}\xspace}
\newcommand{\Akdd}{\textsf{Double Ham Tree}\xspace}
\newcommand{\AMatD}{\textsf{Mat Poly Dual}\xspace}
\newcommand{\AMatL}{\textsf{Mat Poly Lines}\xspace}
\newcommand{\AMatP}{\textsf{Mat Poly Points}\xspace}
\newcommand{\AChan}{\textsf{Chan}\xspace}
\newcommand{\AChanS}{\textsf{Chan Simple}\xspace}
\newcommand{\ARS}{\textsf{Random Sample}\xspace}
\newcommand{\AEASE}{\textsf{Biased-L2}\xspace}

\newcommand{\branch}{\textsf{Branching Factor}\xspace}
\newcommand{\inS}{\textsf{Input Size}\xspace}
\newcommand{\out}{\textsf{Output Size}\xspace}
\newcommand{\error}{\textsf{Error}\xspace}

\newcommand{\gentestset}{\textsf{BuildTestSet-x}\xspace}
\newcommand{\dtestset}{\textsf{BuildTestSet-Dual}\xspace}
\newcommand{\ptestset}{\textsf{BuildTestSet-Points}\xspace}
\newcommand{\ltestset}{\textsf{BuildTestSet-Lines}\xspace}

\newcommand{\trap}{\textsf{Trapezoid}\xspace}
\newcommand{\polyT}{\textsf{PolyTree}\xspace}
\newcommand{\cutting}{\textsf{CreateCutting}\xspace}

\newcommand{\zone}{\textsf{Zone}}
\newcommand{\Split}{\ensuremath{\mathsf{split}}}

\newcommand{\plog}{\ensuremath{\mathrm{polylog}}}


\bibliographystyle{plainurl}

\title{Practical Low-Dimensional Halfspace Range Space Sampling}

\titlerunning{Range Space Sampling}

\author{Michael Matheny}{University of Utah}{}{}{}

\author{Jeff M. Phillips}{University of Utah}{}{}{Thanks to supported by NSF CCF-1350888, IIS-1251019, ACI-1443046, CNS-1514520, and CNS-1564287}

\authorrunning{M. Matheny and J. M. Phillips}

\Copyright{Michael Matheny and Jeff M. Phillips}

\subjclass{Theory of computation $\rightarrow$ Computational geometry}

\keywords{Partitions, Range Spaces, Sampling, Halfspaces}

\category{}

\relatedversion{}

\supplement{}

\funding{}
\graphicspath{{IpeFigures/}{Figures/}{Ipe/}{Partition_figures/}{Partition_figures/100000_input_plots/}{Cutting_figures/Plots/}{Cutting_figures/Cuttings/}{Discrepancy_figures/}}
\acknowledgements{}

\EventEditors{Yossi Azar, Hannah Bast, and Grzegorz Herman}
\EventNoEds{3}
\EventLongTitle{26th Annual European Symposium on Algorithms (ESA 2018)}
\EventShortTitle{ESA 2018}
\EventAcronym{ESA}
\EventYear{2018}
\EventDate{August 20--22, 2018}
\EventLocation{Helsinki, Finland}
\EventLogo{}
\SeriesVolume{112}
\ArticleNo{62}
\nolinenumbers 

\begin{document}

\maketitle

\begin{abstract}
We develop, analyze, implement, and compare new algorithms for creating $\eps$-samples of range spaces defined by halfspaces which have size sub-quadratic in $1/\eps$, and have runtime linear in the input size and near-quadratic in $1/\eps$.  The key to our solution is an efficient construction of partition trees.  Despite not requiring any techniques developed after the early 1990s, apparently such a result was never explicitly described.  We demonstrate that our implementations, including new implementations of several variants of partition trees, do indeed run in time linear in the input, appear to run linear in output size, and observe smaller error for the same size sample compared to the ubiquitous random sample (which requires size quadratic in $1/\eps$).  This result has direct applications in speeding up discrepancy evaluation, approximate range counting, and spatial anomaly detection.  
\end{abstract}

\section{Introduction}

Taming the relationship between a point set $X \subset \R^d$ and its interaction with halfspaces $\c{H}_d$, has long been a focus of computational geometry.  Understanding and controlling this interaction is at the heart of problems in range searching, linear classification, coresets, and spatial anomaly detection.  This pair $(X,\c{H}_d)$ describes a range space, the combinatorial set of all subsets of $X$ defined by $h \cap X$ for any halfspace $h \in \c{H}_d$.  
In this paper we focus on two specific and closely-interrelated (as it turns out) constructions for $(X,\c{H}_d)$: $\eps$-samples and partitions, defined next.  
 
An \emph{$\eps$-sample} $Y \subset X$ of $(X, \c{H}_d)$ is a small point set that approximately preserves density with respect to halfspaces: for all $h \in \c{H}_d$, and error parameter $\eps \in (0,1)$ it bounds 
\[
\error(X,Y) = \max_{h \in \c{H}_d} \left| \frac{|Y \cap h|}{|Y|} - \frac{|X \cap h|}{|X|} \right| \leq \eps.  
\]

It is known that $\eps$-samples of size $\Theta(1/\eps^{2d/(d+1)})$ exist for halfspaces~\cite{Ale90}, and in general this size may be required~\cite{Mat95a}.  For many years (c.f., \cite{Mat99,Cha01}) such proofs were not constructive, as they relied on the ``partial coloring lemma''; until in 2010 when Bansal~\cite{Ban10} introduced a polynomial time construction.  The runtime of the low-discrepancy coloring on $m$ points was later reduced~\cite{LM12} to $O(m^{3(d+1)} \plog(m))$, this within the standard merge-reduce framework~\cite{CM96} results in a $O(n (1/\eps)^{2d(3d+2)/(d+1)} \plog(1/\eps))$ runtime for sample construction-- which is still not very efficient.  For instance for $d=2$, this requires $O(n (1/\eps)^{10 + 2/3}\plog(1/\eps))$ time.  
A random sample, which can be generated in $O(n + 1/\eps^2)$ time, is an $\eps$-sample of size $O(\frac{1}{\eps^2}(d+\log \frac{1}{\delta}))$ with probability at least $1-\delta$~\cite{VC71,LLS01}. 
 The above discrepancy-based algorithm can be run on the output of this sample to get optimal size, but it only reduces the overall runtime of the $\eps$-sample construction to $O(n + (1/\eps)^{2d(3d+2)/(d+1) + 2} \plog(1/\eps))$.  

There are other constructions for $\eps$-samples, which either focus on small space (to work in a stream)~\cite{STZ04,BCEG07} or have better performance in practice without size guarantees below that of random sampling~\cite{ABM06}.  As with the optimal algorithms, these require the enumeration of all combinatorial halfspaces associated with a set of size roughly the size of the final $\eps$-sample, requiring at least $\Omega((1/\eps^{2d/(d+1)})^d)$ time.  Indeed Suri \etal~\cite{STZ04} concludes with:
``\emph{The high computational complexity of the currently known algorithms for these subroutines may be prohibitive for data stream applications. It is a long standing open problem to find efficient exact or approximation algorithms for either of them.}''

A \emph{partition} of $(X, \c{H}_d)$ is a set of pairs $\{(\Delta_1, X_1), (\Delta_2, X_2), \ldots \}$ where each $\Delta_i$ is a small complexity region and contains $X_i \subset X$, and $X$ is the disjoint union of the $X_i$s.  It is a \emph{$(t,z)$-partition} when there are $O(t)$ pairs, $|X_i| \leq 2n/t$; and each $h \in \c{H}_d$ crosses $O(t^z)$ cells.  
The smallest possible guarantee for $z$ is $z=(1-1/d)$, and an algorithm for such a construction was provided by Matou\v{s}ek~\cite{Mat92}, that takes $O(n \log t)$ time after $O(n^{1+\eta})$ preprocessing time for any $\eta > 0$.  Chan provided a refined algorithm which takes $O(n \log t)$ time, and has a few nicer structural properties.    
There are other algorithms which generate $(t,z)$-partitions for large values of $z$.  For instance in $\R^2$ Edelsbrunner and Welzl~\cite{EW86} describe an algorithm with $z=0.695$ and a structure similar to a kd-tree leads to a size of $z= \log_4(3) \leq 0.7925$~\cite{Wil82}.

\subparagraph*{Our results.}
In this paper, we use partition construction algorithms to efficiently create $\eps$-samples for $(X,\c{H}_d)$.  Our algorithm takes $O(n + \frac{1}{\eps^2} \log \frac{1}{\eps})$ time and produces an $\eps$-sample of size $O((1/\eps)^{2d/(d+1)} \log^{d/(d+1)}(1/\eps))$, nearly matching the $\Omega(1/\eps^{2d / (d + 1)})$ lower bound.  

We also implement several variants of these algorithms in $\R^2$.  We know of no other implementation of $\eps$-sample construction for $(X,\c{H}_2)$ which is guaranteed to get subquadratic size in $1/\eps$.  We know of no implementations of optimal partitions, although Har-Peled~\cite{Har00} has implemented a related concept called a cutting, which (as we will explain) is a key ingredient for creating partitions.  We choose to build our own implementation of cuttings, and explain why we did not use Har-Peled's in Section \ref{sec:implement}.  

We are able to demonstrate that our algorithm indeed scales linearly in $n$, scales linearly in the output size, and produces $\eps$-samples with less measured error than random samples.  

Our initial goal in fast $\eps$-sample construction comes from finding approximately maximal ranges in range spaces, as part of a large-scale spatial anomaly detection framework~\cite{SSSS,Scanning}.  At a high level, these algorithms follow two phases:  (1) create an $\eps$-sample $S$,  (2) use $S$ to find an approximately maximal range.  The second step takes $O(|S|/\eps)$ or $O(|S|/\eps^2)$, so it is only worth using a smaller $\eps$-sample of size roughly $1/\eps^{4/3}$ if it takes less than $1/\eps^{2+1/3}$ or $1/\eps^{3+1/3}$ time to create.  We show this is the case in theory, and in practice.  
Similar overall runtime gains exist when using $S$ for classification, or approximate range counting, or other tasks where the use of $S$ is more expensive than the new construction time.

\section{Overview and Proof for Fast $\eps$-Samples}
\label{sec:sampling-half-space}

The key to our construction of an $\eps$-sample $S$ for a range space $(X, \c{H}_d)$ 
is to first create a partition over $(X,\c{H}_d)$.  
Given such a partition algorithm, our algorithm constructs an $\eps$-sample as follows.  Randomly sample $Y \subset X$, construct the partition $\Delta = \{(\Delta_1, Y_1), \ldots, \}$ on $Y$, and return a single point at random from each $Y_i$ weighted by $|Y_i|$.

\begin{theorem}
For range space $(X,\c{H}_d)$ with $|X|=n$ and constant $d$, 
with constant probability an $\eps$-sample $S$ of size 
$O(\frac{1}{\eps^{2d/(d + 1)}} \log^{d/(d + 1)} \frac{1}{\eps})$ 
can be constructed in 
$O(n + \frac{1}{\eps^2} \log \frac{1}{\eps})$ 
time.
\label{thm:halfplanesample}
\end{theorem}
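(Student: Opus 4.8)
The plan is to combine three standard ingredients: (i) a random sample to cheaply reduce the problem size, (ii) a $(t,z)$-partition (with optimal $z = 1-1/d$) built on that sample, and (iii) a weighted one-point-per-cell resampling, then to track the error and running time through each stage and optimize the free parameters $|Y|$ and $t$.

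\textbf{Step 1: Reduce to a smaller set by random sampling.} First I would draw a uniform random sample $Y \subseteq X$ of size $m = |Y|$ to be fixed later. By the classical VC-dimension sampling bound~\cite{VC71,LLS01}, $Y$ is an $\eps_1$-sample of $(X,\c{H}_d)$ with constant probability provided $m = \Theta(1/\eps_1^2)$; equivalently, choosing $m$ first, $Y$ has error at most $\eps_1 = O(1/\sqrt{m})$ against $X$. This step costs $O(n + m)$ time.

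\textbf{Step 2: Build a partition on $Y$ and resample.} On $Y$ (with $|Y| = m$) I would run the partition algorithm of Matou\v{s}ek/Chan to obtain a $(t,z)$-partition $\{(\Delta_i, Y_i)\}$ with $z = 1-1/d$, $O(t)$ cells, and $|Y_i| \le 2m/t$, in $O(m \log t)$ time. From each cell pick one representative point uniformly at random, weighted by $|Y_i|$; call the resulting weighted set $S$, with $|S| = O(t)$. The key error analysis: for any halfspace $h$, the cells are of three types --- fully inside $h$, fully outside $h$, and crossed by $\partial h$. Inside/outside cells contribute zero error (their representative carries exactly the right weight). For a crossed cell, the representative is either counted or not, contributing at most $|Y_i| \le 2m/t$ to the discrepancy $\big| |S \cap h| - |Y \cap h| \big|$ (reweighted to mass $1$, this is $\le 2/t$ per crossed cell). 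Since $h$ crosses $O(t^z)$ cells, the total error of $S$ against $Y$ is $\eps_2 = O(t^z / t) = O(t^{-1/d})$. Actually, one should be slightly more careful: the naive bound gives $O(t^{z-1})$ deterministically, but since each crossed-cell contribution is a mean-zero random variable of magnitude $O(1/t)$ and there are $O(t^z)$ of them, a Chernoff/Hoeffding bound improves this to $\eps_2 = O\big( \sqrt{t^z}\cdot (1/t) \cdot \sqrt{\log(\#\text{halfspaces})} \big)$. The number of distinct halfspace-ranges on $S$ is $m^{O(d)}$, so $\sqrt{\log(\cdot)} = O(\sqrt{\log m})$, giving $\eps_2 = O\big( t^{-(d+1)/(2d)} \sqrt{\log m} \big)$ with constant probability.

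\textbf{Step 3: Balance parameters.} By the triangle inequality $\error(X,S) \le \eps_1 + \eps_2 = O(1/\sqrt m) + O(t^{-(d+1)/(2d)}\sqrt{\log m})$, and we need this $\le \eps$ while $|S| = O(t)$ and the total time is $O(n + m + m\log t)$. Set $1/\sqrt{m} = \Theta(\eps)$, i.e. $m = \Theta(1/\eps^2)$, so $\sqrt{\log m} = \Theta(\sqrt{\log(1/\eps)})$; then choose $t$ so that $t^{-(d+1)/(2d)}\sqrt{\log(1/\eps)} = \Theta(\eps)$, i.e. $t = \Theta\big( (1/\eps)^{2d/(d+1)} \log^{d/(d+1)}(1/\eps) \big)$. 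This yields $|S| = O(t) = O\big( \eps^{-2d/(d+1)} \log^{d/(d+1)}(1/\eps)\big)$ as claimed, and running time $O(n + \frac{1}{\eps^2}\log t) = O(n + \frac{1}{\eps^2}\log\frac{1}{\eps})$ (noting $\log t = \Theta(\log(1/\eps))$ for constant $d$). Finally I would union-bound the two constant-probability events (the random sample being a good $\eps_1$-sample, and the resampling concentrating over all $m^{O(d)}$ ranges) so that both hold simultaneously with constant probability, possibly after adjusting constants.

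\textbf{Main obstacle.} The routine parts are the parameter balancing and the VC bound. The step requiring the most care is the concentration argument in Step 2: one must argue that it suffices to control the error only over the $m^{O(d)}$ combinatorially distinct halfspaces on $Y$ (not all of $\c{H}_d$), verify that the per-cell contributions are independent (the representatives are chosen independently across cells) mean-zero bounded variables, and confirm that the $O(t^z)$ crossing bound from the partition feeds correctly into the Hoeffding variance term --- it is this $\sqrt{t^z}$ rather than $t^z$ that is responsible for the subquadratic $|S|$, so getting it right is essential. A secondary subtlety is that the partition guarantee is stated for the set $Y$ on which it is built, so all crossing/size bounds in Step 2 are with respect to $Y$, and only Step 1 connects back to $X$; keeping the two approximation layers cleanly separated via the triangle inequality is what makes the argument go through.
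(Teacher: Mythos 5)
Your proposal is correct and follows essentially the same route as the paper's proof: a VC random sample of size $\Theta(1/\eps^2)$, a $(t,1-1/d)$-partition on that sample with one weighted representative per cell, a Hoeffding bound over the $O(t^{1-1/d})$ boundary cells union-bounded over the $m^{O(d)}$ combinatorial halfspaces, and the same parameter balancing. The concentration step you flag as the main obstacle is exactly the argument the paper carries out, with matching exponents throughout.
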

\begin{proof}
Take a random uniform sample $Y \subset X$ of size $s = O(\frac{1}{\eps_1^2})$ then $Y$ is an $\eps_1$-sample of $(X,\c{H}_d)$ with constant probability.  Next we build a $(t,1-1/d)$-partition on $Y$ in $O(s \log t)$ time~\cite{Chan10}; this results in a set of $O(t)$ partitions of $Y$ each containing at most $2s/t$ points such that any halfspace in $\c{H}_d$ will only cross $O(t^{1 - 1/d})$ of them. 
From each partition $(\Delta, Y_i)$ we will choose a single point $y_i$ at random to put in our result $S$, and weight it proportional to the number of points in the partition.

In our construction any partition contained completely inside a halfspace or outside does not contribute to the error of the sample. Only regions crossing the boundary of the halfspace $h$ contribute to the error. 
The error in each boundary region is an independent bounded random variable $V_i$ with value in the range $[0, 2\frac{s}{t}]$. 
There are at most $k = c \cdot t^{1 - 1/d}$ boundary regions for some constant $c$, so we can apply Hoeffding's inequality, with failure probability $\delta$ 
\[
 \Pr[|V - \E[V]| \ge  s \eps_2] 
	\le 
	2 \exp \left ( -\frac{2 s^2 \eps_2^2}{c t^{1 - 1/d} \cdot 4 \frac{s^2}{t^2}} \right )
	= 
	2 \exp \left (- \frac{\eps_2^2 t^{1 + 1/d} }{2c} \right ) \le \delta. 
\]
Rearranging the last inequality, gives that with $t \geq (\frac{2c}{\eps_2^2} \ln \frac{2}{\delta})^{d/(d+1)}$, 
for any one halfspace $h$, $|V - \E[V]|$ is more than $s \eps_2$ with probability at most $\delta$.

There are $O(s^d) = O(1/\eps_1^{2d})$ halfspaces in $(Y, \c{H}_2)$, so setting $\delta = c_2 \eps_1^{2d}$ for some constant $c_2$, and the additivity property of $\eps$-approximations~\cite{Cha01}, gives an $(\eps_1 + \eps_2)$-approximation of size 
$t \ge \left (\frac{4d c}{\eps^2_2}\ln \frac{2}{c_2\eps_1} \right)^{d / (d + 1)}$ 
with constant probability. By setting $\eps_1 = \eps_2 = \frac{\eps}{2}$ the total error is $\eps_1 + \eps_2 = \eps$ and the size of the $\eps$-sample is 
$O\left (\frac{1}{\eps^{2d/(d + 1)}}\log^{d/(d + 1)}\left  (\frac{1}{\eps} \right)\right)$ 
for constant $d$. 
Creating $Y$ takes $O(n + \frac{1}{\eps^2})$ time, the partition tree construction takes $O(\frac{1}{\eps^2} \log \frac{1}{\eps})$ time since $t =O(\mathsf{poly}(\frac{1}{\eps}))$, and the re-weighting and sampling step takes $O(\frac{1}{\eps^2})$ time. 
In total therefore the entire algorithm takes 
$O(n + \frac{1}{\eps^2} \log \frac{1}{\eps})$ time.
\end{proof}

The same proof technique will work with other $(t,z)$-partitions in place of Chan's~\cite{Chan10}.  
In general, for $z < 1$, a scheme that generates a $(t,z)$-partition of $t$ cells where any halfspace crosses at most $O(t^z)$ of the cells results in an $\eps$-sample of size $O(\frac{1}{\eps^{2 / (2 - z)}} \log^{1/(2 - z)} \frac{1}{\eps})$. 
For instance in $\R^2$, Edelsbrunner and Welzl's $z=0.695$ result~\cite{EW86} in an $\eps$-sample of size $O(\frac{1}{\eps^{1.532}} \log^{0.766} (\frac{1}{\eps}))$.  
Alternatively, Willards $z = 0.7925$ result in $\R^2$~\cite{Wil82} results in an $\eps$-sample size of 
$O(\frac{1}{\eps^{1.657}} \log^{0.829} (\frac{1}{\eps}))$.

\section{Overview of Algorithms for Constructing the Partition}
\label{sec:partition}

Random sampling, and sampling a point from each cell of a partition is straight-forward; the challenge in our implementation of Theorem \ref{thm:halfplanesample} is the creation of a partition.  In this section we describe the key components of the two prominent optimal size $(z=1-1/d$) algorithms:
Matou\v{s}ek's efficient partitioning~\cite{Mat92} (\pmat) and 
Chan's Optimal partitioning~\cite{Chan10} (\pchan). 

These algorithms rely on a related object called a cutting, defined over $\R^d$ and a set of $m$ hyperplanes $H$.  For a parameter $r < m$, a $(1/r)$-\emph{cutting} is a decomposition of $\R^d$ into $O(r^d)$ cells $\Lambda = \{\Lambda_1, \Lambda_2, \ldots\}$, so no cell is crossed by more than $O(m / r)$ hyperplanes in $H$.  Such cuttings exist and can be computed in $O(m r^{d-1})$ time~\cite{CF90,Mat91}.  

Cuttings are almost enough to compute partitions.  A set of $n$ points in $\R^d$ induces $m = O(n^d)$ combinatorially distinct halfspaces $H$.  Letting $r = t^{1/d}$, the total number of crossings will be $O(r^d \cdot m/r) = O(m r^{d-1})$, so the \emph{average} per region will be $O(r^{d-1}) = O(t^{1-1/d})$.  Also, ignoring dependences, the average cell contains $O(n/r^d) = O(n/t)$ points, as desired.  
The main challenge is ensuring that these average properties of the cutting map to the specific properties required for the partition.  In short, we can create an appropriate cutting, detect where it does not satisfy the partition properties, and then amend it so it does.  

We specifically focus our implementations in the $d=2$ setting, which for instance is enough for our original application of spatial anomaly detection we mentioned previously~\cite{Scanning}, even in higher dimensions.  Our implementations are similar to the existing implementation of cuttings by Har-Peled~\cite{Har00}, but adds several features which will aid in computing the partition. Our cutting implementation builds a cutting by iteratively adding lines in a random order while keeping track of the number of lines crossing each cell in an arrangement.  From a practical point of view, it is important to force the cells of the partition to be constant size.  We have focused on two methods for this, a vertical trapezoidal decomposition (\textsf{Trapezoid}), or a hierarchy of constant size polygons (\textsf{PolyTree}).

Constructing a $(1/r)$-cutting over the entire set of $O(n^d)$ halfspaces would lead to a runtime of $O(n^d r^{d- 1})$ which would be prohibitively slow. Instead of using the full set of halfspaces a smaller set (a test set) can be constructed, such that the number of partitions crossed by any halfspace in this test set will not be too different from the full set $\c{H}_d$.  

In particular, an \emph{$(1/r)$-test set} is a set of halfspaces $H$ which applies to any partition $\Delta = \{(\Delta_1,X_1), (\Delta_2,X_2), \ldots \}$ and point set $X$ of size $n$ so $|X_i| \geq n / r$ for all $(\Delta_i,X_i) \in \Delta$. 
 It ensures that if $\kappa = \max_{h \in H} |h \cap \Delta|$, then $\max_{h \in \c{H}_d} |h \cap \Delta| \leq O(\kappa + r^{1-1/d})$.  Here $h \cap \Delta$ is the set of $(\Delta_i,X_i) \in \Delta$ for which $\Delta_i$ intersects $h$, but do not completely contain $h$.  	
Test sets can be built a number of ways, including randomly sampling lines, randomly sampling points and using the lines they induce, and using the dual arrangement.

\section{Implementation Particulars of Partitions}
\label{sec:implement}

Our implementation of Partition trees is in python.  It relies on an efficient way to construct and maintain an arrangement of lines and associated points.  At each step of the construction we 
will maintain a tree with leaves that correspond to cells $\Delta_1, \Delta_2, \ldots$ of an arrangement.  Each cell will maintain a list of contained points $X_i \in \Delta_i$ and crossing lines.  

As part of the construction so the result is a $(t,1-1/d)$-partition $\Delta$, with desired $t$ parameter, cells can be refined by applying various operations to them.  For instance a cutting can be constructed locally inside of a cell $\Delta_i$, or a cell can be partitioned into a set of sub-cells.  
  
\subparagraph*{Geometric Primitives.} 
All of our algorithms rely on operations over line segments. The most important operation is being able to test, within a region $\Delta$, if a line lies completely above a line segment or if it crosses a line segment. 
This fairly simple operation is slightly complicated by numerical issues that can occur. 
For instance when constructing a test set using the \ptestset or \dtestset method (see below)
many lines will potentially meet at the same point. Line segments that meet in this 
point could be mistaken as crossing. To handle numerical issues we use python's
implementation of \texttt{math.isclose} to handle point comparisons. This method allows us to assign two floating point numbers as equal if their relative values are sufficiently close~\cite{PEP}.  
Moreover, all methods that compare line segments have closed and open versions where closed versions allow end-point overlap and open versions do not. The method \texttt{segment.above\_closed(line)} returns true if the \texttt{line} intersects with the segment at one the segment's end points, but is otherwise above the segment, while \texttt{segment.above\_open(line)} returns false in this case.  This allows us in our experiments to effectively handle degeneracies while avoiding slower exact precision libraries.

Internally our segment objects are represented by the slope, $a$, the 
$y$-intercept, $b$, and the $[x_l, x_r]$ interval on the $x$-axis the segment is defined over. 
This representation makes many operations easy, but also results in several challenges, most notably: vertical lines are undefined, unbounded segments (e.g., $(-\infty, x_r]$) require extra logic to handle crossing queries, lines which are nearly vertical can become numerically unstable, and the dual of unbounded polygons require significant extra logic to handle correctly.  However, we have implemented stable functions for intersect and above relations for pairs of segments in a cell.  

Using line segments and points as the primitives we also define more complicated 
structures notably: polygons, dual wedges, vertical line segments, and trapezoids.

\subparagraph*{PolyTrees.} 
There are a number of ways to maintain the structure of an arrangement.  A common method is to store each cell with a corresponding list of pointers to adjacent cells. Inserting a line involves finding the leftmost crossing cell, 
identifying the next adjacent cell the line crosses, splitting the crossed cell into an 
upper and lower cell, and then repeating this operation for each crossed cell.

This has a number of downsides: there are special cases if a line crosses a vertex of 
a cell, inserting points into the arrangement requires the maintenance of a secondary structure, and cells require a significant amount of adjacency information that must be maintained. 
Instead of maintaining this structure we use the idea of forcing each cell to be simple, and follow certain restrictions, as introduced by Seidel~\cite{seidel1991} and refined by Har-Peled~\cite{Har00}.  
 
In particular, we either maintain a decomposition into constant complexity polygons (polygons with a constant number of boundary segments) or a trapezoidal decomposition.  In both cases we maintain a tree where each node in the tree consists of a line segment that separates a cell into two cells. 
With trapezoids an inserted line could in some cases divide a trapezoid vertically into 2 separate trapezoids and then horizontally into 4 separate trapezoids. 
In the case of polygons the inserted line would split the polygon into two separate polygons which could possibly be further split if the number of sides in either of the resulting polygons is greater than a chosen constant.  We also enforce that no vertical segments are used to avoid limitations of our line segment representation.  

Given a line $h$ and a decomposition $\Lambda = \{\Lambda_1, \Lambda_2, \ldots\}$, the \emph{zone} of $h$ is the set of regions $\Lambda_i$ that intersect $h$; we represent this as $\zone_h = \Lambda \cap h$.  
To find the zone of a line in this structure at each node we treat the line as an infinite length segment and then traverse the line down the tree. At each node we will have three cases where the portion of the line contained in the node lies completely either above or below, or crosses the current node's line segment. In the completely above or below case we merely traverse to the above or below child of the node.  In the crossing case we split the portion of the line contained in the node into two segments, above and below, and recursively query the above and below nodes. 
Point information is easy to maintain with this method since a point always lies on one side of the line segment, so the tree structure can be used to insert or remove points in logarithmic time to the number of cells.

More complicated structures can also be queried on these trees, most notably wedges and  polygons. Wedge queries are particularly useful in \pchan since a wedge is the dual of a line segment, so the number of points contained in a wedge corresponds in the dual to the number of lines crossing a line segment.  
\begin{algorithm} 
  	\caption{$\cutting(H,r)$} 
  	\label{alg:cutting} 
  	\begin{algorithmic}[1] 
  		\STATE $\Lambda = \R^2$
  		\FOR{$h \in H$ (ordered by a random weighted permutation)}
  		\STATE Find $\mathsf{Viol}(h,\Lambda) = \{\Lambda_i \in \zone_h(\Lambda) \mid |H \cap \Lambda_i| > |H|/r\}$.  
  		\STATE For all $\Lambda_i \in \mathsf{Viol}(h,\Lambda)$, replace $\Lambda_i$ in $\Lambda$ by $\Split(\Lambda_i,h)$
  		\ENDFOR
  		\RETURN $\Lambda$
  	\end{algorithmic}
\end{algorithm}

\vspace{-.1in}
\subparagraph*{Cuttings.} 
Our cutting algorithm $\cutting(H,r)$ (Algorithm \ref{alg:cutting}) follows closely Algorithm \ref{alg:cutting}, from Har-Peled~\cite{Har00}. 
We implement the cutting with respect to weighted lines as this speeds up and somewhat simplifies the later partitioning algorithms. We require a weighted permutation of lines using \cite{ES06}; this ensures that the probability we see a line after some point in the permutation is equivalent to the probability we would have seen at least one instance after seeing that many distinct lines in a variant where weights are multiplicities (as advocated by Chan~\cite{Chan10}), and each copy is treated independently in a uniform random permutation.  
For notational convenience, for a subset $H' \subseteq H$, let $|H'| = \sum_{h \in H'} w(h)$, where $w(h)$ is the weight implicitly stored with each halfspace $h \in H$.  

\begin{figure}[b]
	\vspace{-.18in}
 	\includegraphics[width=0.32\linewidth]{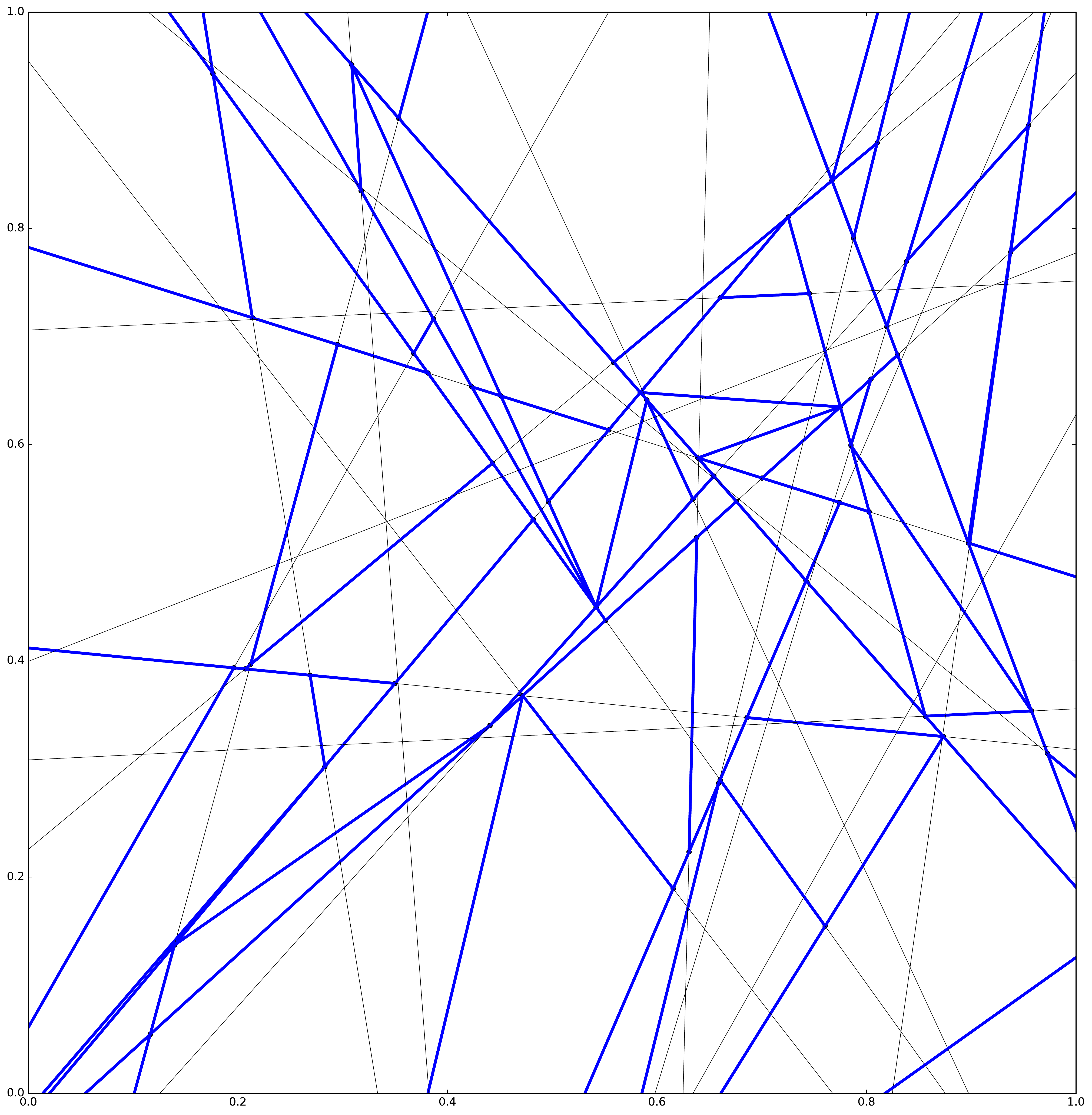}
 	\includegraphics[width=0.32\linewidth]{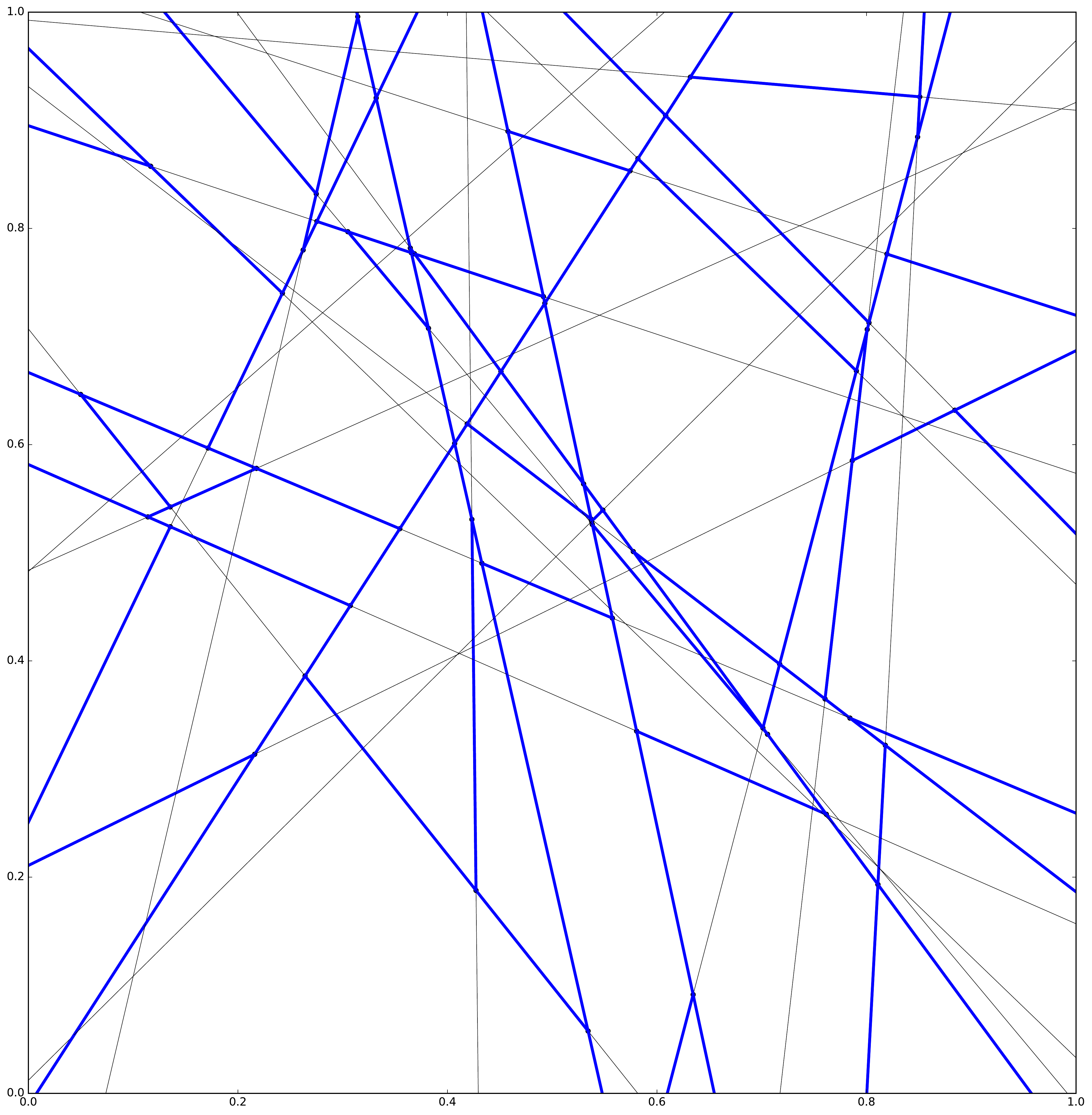}
 	\includegraphics[width=0.32\linewidth]{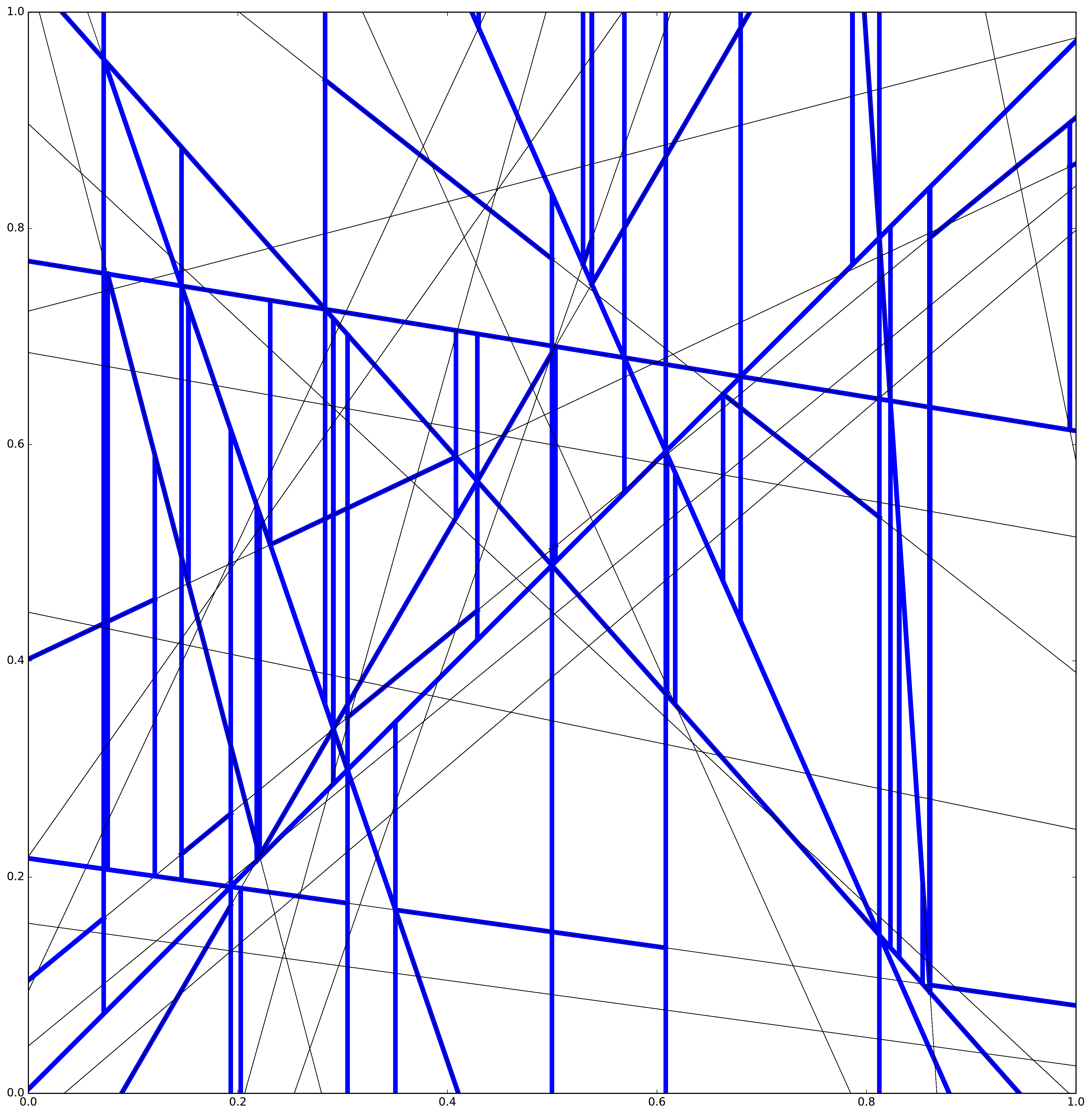}
 	\vspace{-.12in}
 	\caption{Cuttings with 25 lines and at most 5 lines crossing each cell. From left to right \polyT with at most 4 sides, \polyT with at most 8 sides, and \trap.}  
 	\label{fig:cutting_diag}
\end{figure}
In practice, we implement $\cutting(H,r)$ slightly differently then described in Algorithm \ref{alg:cutting}.  Instead of choosing the $h \in H$ to process in the random weighted permutation, our approach is centered around the violated cells.  We choose a cell $\Lambda_i \in \Lambda$ which is too heavy (i.e., $|H \cap \Lambda_i| > |H|/r$), and then choose some halfspace $h \in \Lambda_i \cap H$, and only replace $\Lambda_i$ (not the entire $\zone_h$) with the result of the $\Split(\Lambda_i, h)$, which divides $\Lambda_i$ into two parts separated by $h$.  This change has two advantages.  First we do not need to find the $\zone_h(\Lambda)$ which involves traversing the \polyT, so our approach is slightly faster.  Second, we can choose the $h \in \Lambda_i \cap H$ to use in the split wisely; e.g., as the one that maximizes the smaller of the two resulting cells.  We find the second heuristic produces slightly smaller cuttings in practice, but is significantly slower, and is not used in our experiments.    
	
How $\Split(\Lambda_i, h)$ is implemented is the difference between the \trap-based cutting and the \textsf{Polygon}-based cutting we refer to as \polyT.  

For the \trap-based method each cell $\Delta$ is a trapezoid to begin with. The $\Split$ operation first inserts up to two new vertical cuts for each intersection of the line with the top or bottom of the cell and then horizontally cuts the resulting cells using the inserted line. 
For \polyT, the first important detail is that we store $H \cap \Lambda_i$ as the line segments restricted to where they intersect $\Lambda_i$.  

On a split, we need to maintain which halfspaces $h \in \Lambda_i \cap H$ are in each child; if $h' \in \Lambda_i \cap H$ intersects both children, then we split $h'$ into two line segments at the point where it intersects $h$, and store the corresponding segment in each child.  If $h' \in \Lambda_i \cap H$ is in only one child, because we store them as segments it is easy to check which child it goes into.  

Both of these algorithms are then fairly straightforward to implement once given structures for efficiently maintaining arrangements of line segments.
  
We find that \polyT is faster and produces a smaller cutting than \trap-based ones; see Figure \ref{fig:r4cuttingsizesmall} and Figure \ref{fig:cuttingsize} which show the runtime and cutting size as a function of input size and choice of $r$. 
For this reason we will primarily focus on \polyT hereafter.  
  
  \begin{figure}[t]
  	\includegraphics[width=0.49\linewidth]{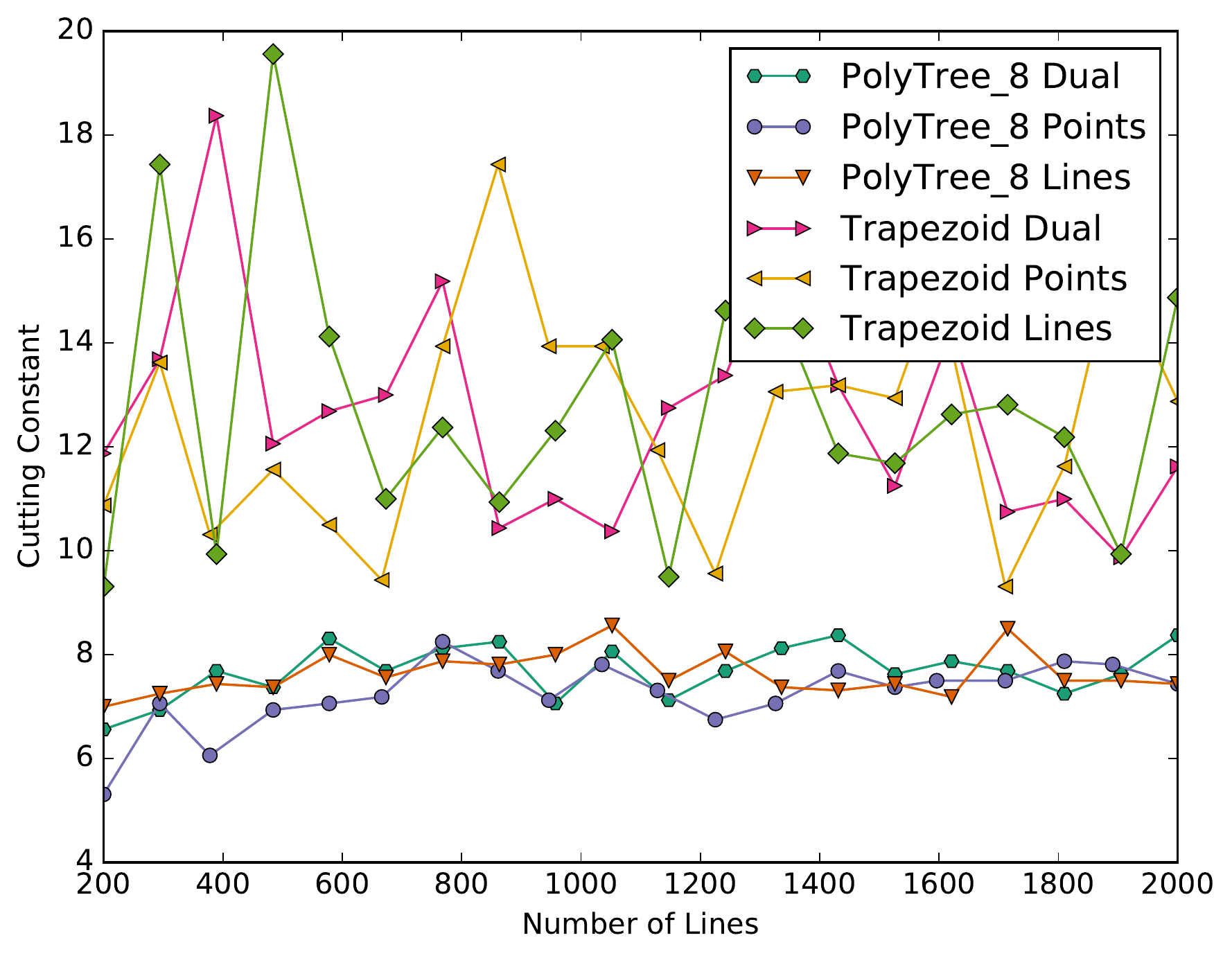}
	\includegraphics[width=0.49\linewidth]{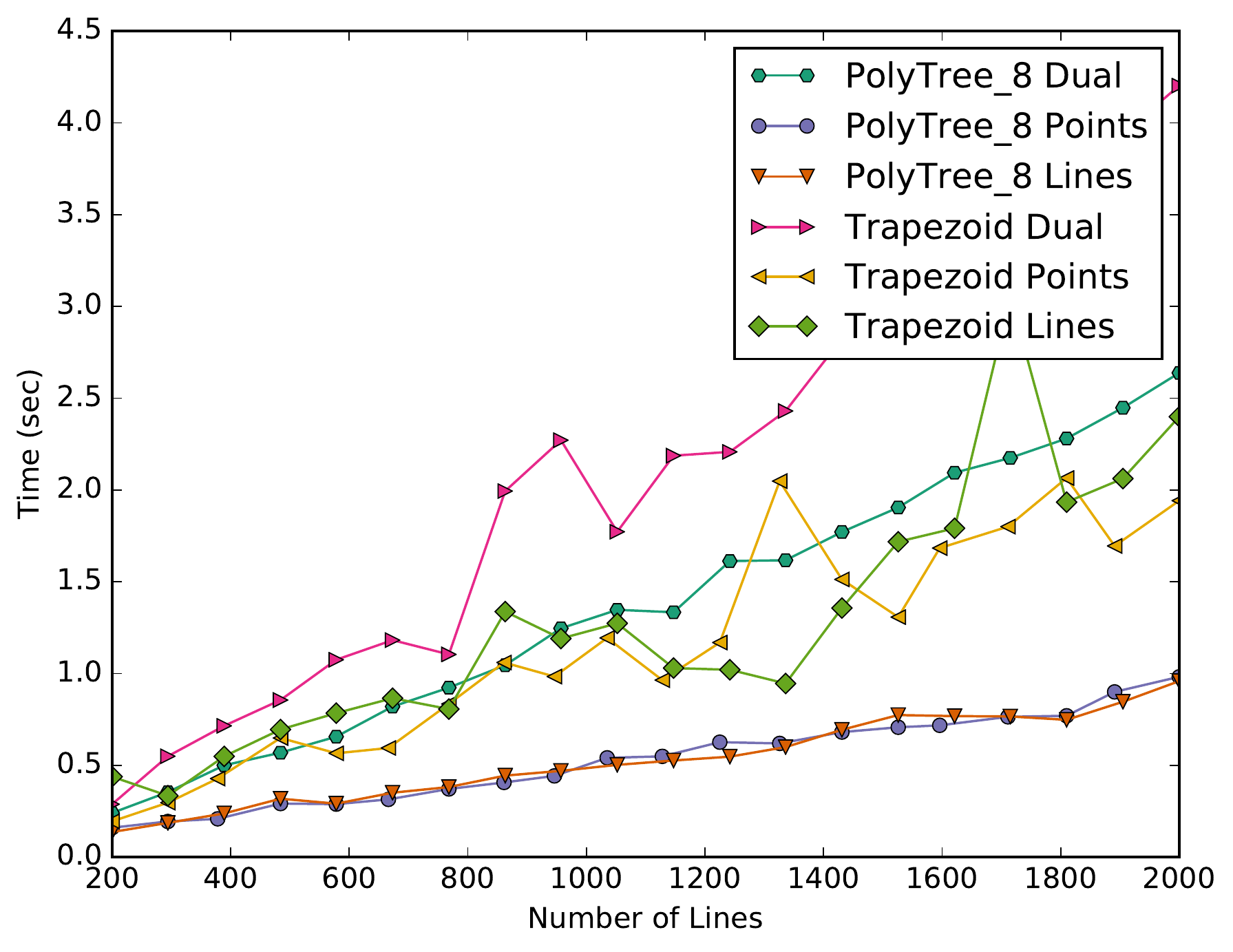}
	
	\vspace{-.18in}
  	\caption{Size of cutting (divided by $r^2$) and time (in seconds) vs. the number of input lines.}  
  	\label{fig:r4cuttingsizesmall}
  \end{figure}
  \begin{figure}[t]
     \vspace{-.2in}
  	\includegraphics[width=0.49\linewidth]{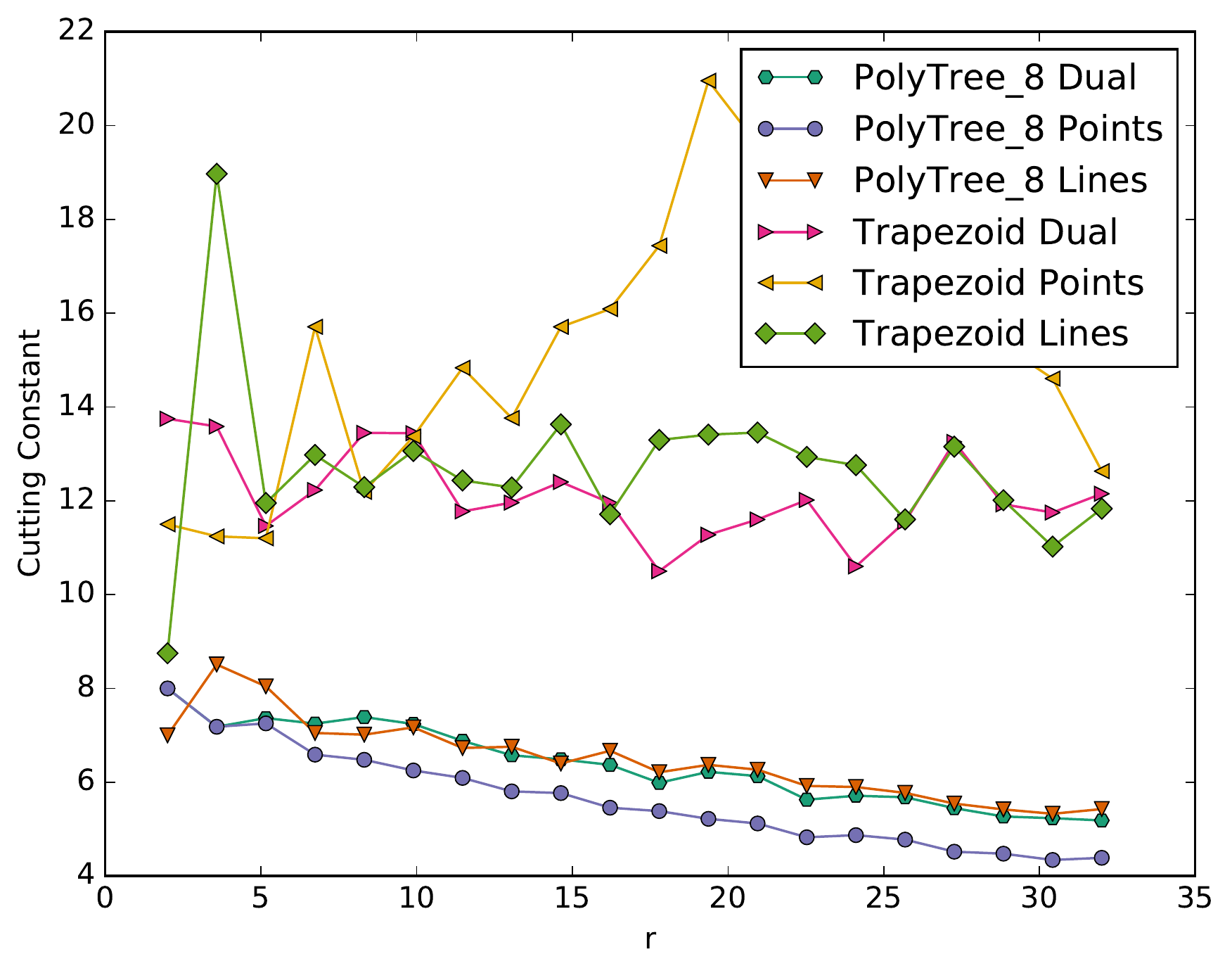}
  	\includegraphics[width=0.49\linewidth]{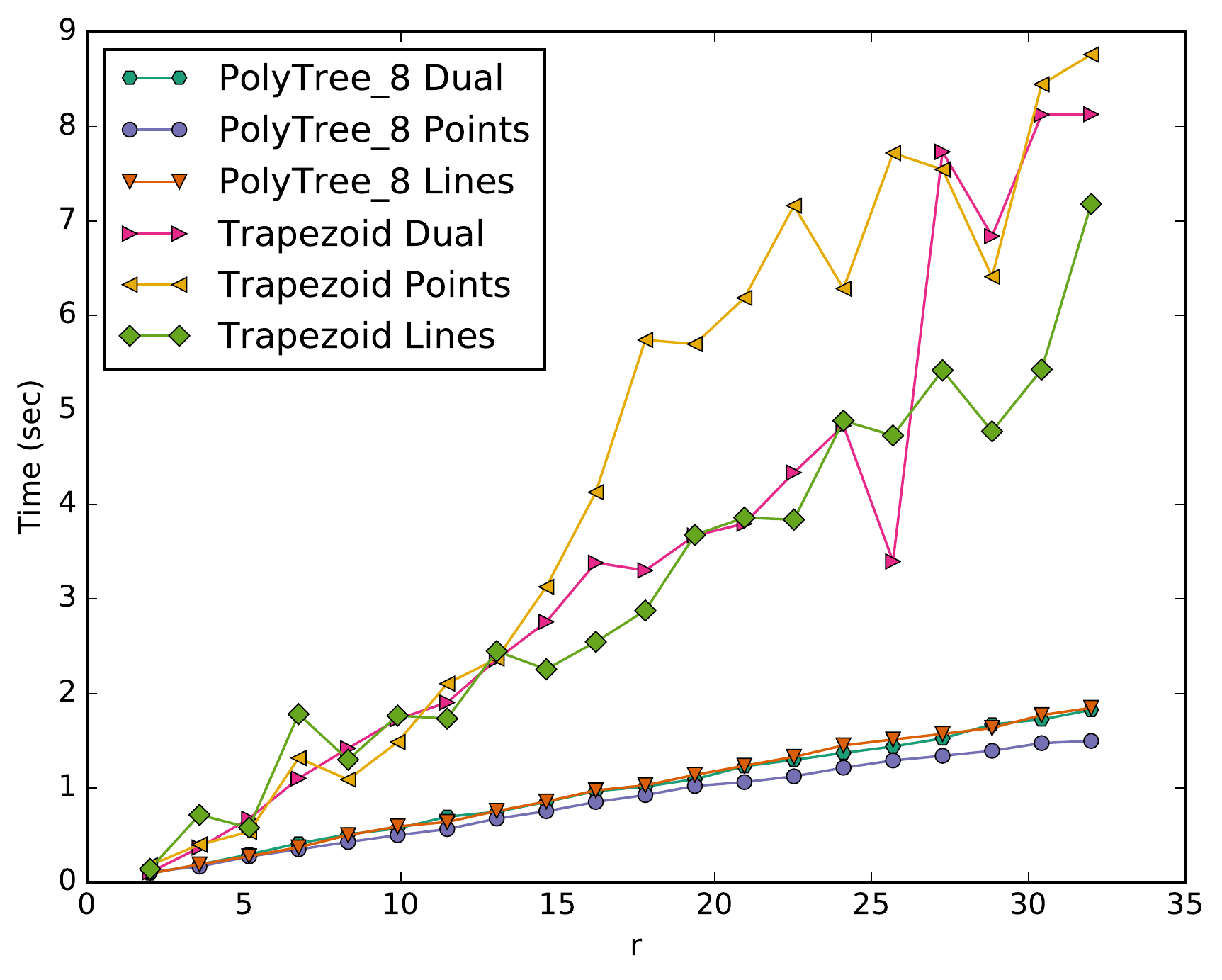}

	\vspace{-.18in}
  	\caption{Size of cutting (divided by $r^2$) and time (in seconds) as $r$ increases for $1000$ input lines.}
  	\label{fig:cuttingsize}
  \end{figure}

\begin{algorithm} 
	\caption{\dtestset$(X, r)$} 
	\label{alg:dualtest} 
	\begin{algorithmic}[1] 
		\STATE $S = \textrm{sample}(X, O(\sqrt{r}\log r))$;  $S^*$ is dual of $S$.   
		\STATE $\Lambda \leftarrow \cutting(S^*, O(r^{1/2}))$ 
		\RETURN the dual of $V^*$, where $V$ is the set of vertices of the cells of $\Lambda$.  
	\end{algorithmic}
\end{algorithm}

\vspace{-.1in}
\subparagraph*{Test Set Generation.}
There are a number of ways to generate test sets (\dtestset, \ptestset, \ltestset), but these do not appear to have a significant effect on the runtime of the final algorithms; again see Figure \ref{fig:r4cuttingsizesmall} and Figure \ref{fig:cuttingsize} for comparisons of the \textsf{PolyTree} and \textsf{Trapezoid} methods.   
The simplest method, \ltestset, simply samples $O(r \log^d n)$ halfplanes, from those defined by passing through $d$ points in $X$.  
The next simplest, \ptestset, samples $O(r^{1/d} \log n)$ points $S$, and then the test set is all halfplanes passing through $d$-tuples chosen from $S$; it again defines $O(r \log^d n)$ halfplanes.  
Finally the most complicated approach is \dtestset (see Algorithm \ref{alg:dualtest}); it produces the smallest size test set, size $O(r)$~\cite{Mat92}, and thus is the one we advocate.  It samples $O(r^{1/d} \log r)$ points $S \subset X$; it considers the dual set of halfplanes $S^*$ of primal points $S$; it creates a $(1/r^{1/d})$-cutting of $S^*$ (in the dual); and then it returns the primal halfspaces defined by the vertices of the cutting in the dual.  
Each halfspace $h$ in the test set $H$ is implicitly endowed with a weight $w(h)$, which by default is $w(h)=1$ for all $h \in H$.  

\begin{algorithm}[h] 
	\caption{$\pmat(X, r, n, j)$} 
	\label{alg:mpart} 
	\begin{algorithmic}[1] 
		\STATE \textbf{if} ($|X| < n / r$) \textbf{then} \textbf{return} $\{(X,\Delta_0)\}$ where $\Delta_0$ contains $X$.  
		\STATE $H \leftarrow \gentestset(X,b/2^j)$ 
		\STATE $\Delta = \emptyset$
		\WHILE{($|X| \geq n/2^j$)}
		  \STATE $\Lambda \leftarrow \cutting(H, \sqrt{b/2^j})$ 	
		  \STATE Find $\Lambda_i \in \Lambda$ so $|X \cap \Lambda_i| > n / b$; shrink $\Lambda_i$ so $|X \cap \Lambda_i| = \lfloor n/b \rfloor$ exactly.
		  \STATE Add $(\Lambda_i, \Lambda_i \cap X)$ to $\Delta$; remove $\Lambda_i \cap X$ from $X$
		  \STATE Double the weight $h \in H$ which cross $\Lambda_i$ 
		\ENDWHILE
        \STATE $\Delta' = \bigcup_{(\Delta_j, X_j) \in \Delta} \pmat(X_j,r,n,j)$
		\RETURN $\Delta' \cup \pmat(X,r,n, j+1)$
	\end{algorithmic}
\end{algorithm}

\vspace{-.1in}
\subparagraph*{Matou\v{s}ek Partitioning.}
We have implemented Matou\v{s}ek's efficient partition trees~\cite{Mat92}. At a high level this algorithm computes the cutting of a test set and then finds a single good cell that contain at least $n/b$ points (for a constant $b$, we use $b=16$ as default).  It adds this cell to the partition, doubles the weight of all halfspaces in the test set crossing that cell, computes a new cutting and good cell.  It repeats until the number of points remaining has been cut by half, and then it recurses on the remained of the points at half the precision (e.g., set $b := 1/2 b$).  
This is too expensive to do with $b=r$, so after this we then recursively partition each cell $(\Delta_i, X_i)$ until the result is an $(1/r, 1-1/d)$-partition as desired.  
The branching factor of the partition tree is not fixed on each level, but will be roughly $b$ on average.  
Algorithm \ref{alg:mpart} presents this approach, and is initially called as $\pmat(X, r, |X|,0)$.  

Note that Line 9 is the refinement step where each cell is further partitioned.  Since the first level is the most important for good $\eps$-samples, faster algorithms could be used at later recurve calls at this step.  
In contrast, the recursive call at Line 10 is handing objects not handled in the first pass, where each pass handles roughly half of the data.

\subparagraph*{Chan Partitioning}
Chan's optimal partition trees \cite{Chan10} are faster in theory than Matou\v{s}ek's algorithm, but are more complicated to implement.  
The algorithm works by processing each node at a certain level in the tree in a random order. For each node it creates a cutting of approximately $b/4$ size for an appropriately large branching parameter $b$ (our implementation uses $b = 22$ as a default).  It then further splits the cells of the cutting to contain $1/b$ fraction of points at that node.  It multiplicatively updates weights for halfplanes that cross each cell. This multiplicative update influences subsequent cuttings by biasing away from creating cells that are crossed by already heavily weighted lines (lines that cross many cells).  After splitting all of the cells in this level of the tree the algorithm recurses on the newly created level. 
The ultimate partition $\Delta = \{(\Delta_1, X_1), (\Delta_2, X_2), \ldots\}$ are the leaf nodes of the tree.  
Algorithm \ref{alg:cpart} presents this approach, calling $\pchan(\{(\R^2, X)\}, r, |X|)$ initially.  

\begin{algorithm} 
	\caption{$\pchan(\Delta, r, n)$} 
	\label{alg:cpart} 
	\begin{algorithmic}[1] 
		\STATE Trim to $\Delta' = \{(\Delta_i, X_i) \in \Delta \mid |X_i| > n/r\}$; \textbf{if} $\Delta' = \emptyset$ \textbf{return} $\Delta$
		\STATE $H = \gentestset(X, |\Delta|)$
			\FOR {$(\Delta_i, X_i) \in \Delta'$}
		     \STATE Sample $L \subset H$, proportional to their weight $w(h)$, at rate $q$
			\STATE $\Lambda = \cutting(L, r_i)$; with $r_i$ chosen so $|\Lambda| \leq b/4$
			 \STATE For all $\Lambda_j \in \Lambda$, further split $\Lambda_j$ (with $\Split$) until $|X_i \cap \Lambda_j| \leq |X_i|/b$
			 \STATE Replace $(\Delta_i, X_i)$ in $\Delta$ with $\{(\Lambda_1, \Lambda_1 \cap X_i), (\Lambda_2, \Lambda_2 \cap X_i), \ldots\}$

			\STATE Update all weights $w(h) = w(h) (1+1/b)^{|h \cap \bar \Lambda_i|/p}$. 
		\ENDFOR
		\RETURN $\pchan(\Delta, r, n)$ 
	\end{algorithmic}
\end{algorithm}

Implementing the algorithm as described is too slow asymptotically, so Chan presents a faster variant, which requires two additional parameters $p$ and $q$.  
Roughly $q = \sqrt{b|\Delta|}/|X|$ (see \cite{Chan10} for details) determines the probability that a line ends up in the reduced test set $L$. The parameter $p$,  about $\sqrt{b/|\Delta|} \log n$ (again, see \cite{Chan10} for details), effects the number of cells $\Lambda_i$ that are used to update the weight in each $h$  (we sample each cells with probability $p$ as opposed to dividing by this number, as written on Line 8).  
Also, Line 4, where $L$ is sampled from $H$, can be made more efficient by only minimally updating $L$ each pass through the loop, since it generally has large weight lines and that set is fairly stable.

Near the bottom of the tree, Line 8 can be expensive.  We make this efficient with a crucial observation that the test set $H$ was generated by computing a cutting over the dual space.  Thus these halfspaces are duals to the vertices of the \polyT structure.  Thus we can search over the \polyT to determine the number of crossing lines.
A cell of the partitioning is a polygon consisting of a constant number of line segments. A line crossing the polygon will cross at least one of the line segments and in the dual this will correspond to a point contained inside of a double wedge.  
For each line segment in the polygon we take its dual (a double wedge) and query the \polyT that was used to construct the test set to determine the number of vertices contained inside of it.  Since we only return the overlapped polygons and each polygon consists of at most a constant number of edges, the number of queried cells can only be a constant factor larger than the number of lines crossed by the line segment.

However, code profiling shows that the two steps involving sampling with $p$ and $q$, and updating $L$ are the most expensive parts of the algorithm.  As a result we also consider a variant $\pchanF$ which avoids these sampling steps that were supposed to speed things up.  In the context of Algorithm \ref{alg:cpart} this basically sets $p=q=1$, so $L=H$, and Line 4 is not required.

The given algorithm is only guaranteed to compute a set of 
partitions in $O(n \log^{O(1)} n)$ time; incurring extra log factors due to the height of the partition tree.  Chan removes log factors with a method he calls bootstrapping.  We do not do this since the branching factor is high (around $22$) so the depth of the tree is low, and this method is not worth the overhead.

In our implementation, we only compute the test set $H$ once at the beginning.  On each recursive call (Line 9) we can reuse it, but simply reset all of the weights to be uniform.

\subparagraph*{Why not use Har-Peled's implementation and CGAL?}
It may seem at first that we could simply use Har-Peled's implementation for $\eps$-cuttings~\cite{Har00}.  However, our initial goal was to use this as part of a code for spatial anomaly detection~\cite{SSSS,Scanning}, and there were several issues that made this less feasible.  
(1) We wanted to use non infinite precision floating point arithmetic.  Har-Peled reports that switching to exact precision representations results in a 30-factor slow down, but was necessary for degeneracy issues.  We managed these precision issues while using floating point arithmetic with careful use of open and closed operators for line above/below and intersection.  
(2) We can measure wedges, and line segments on the \polyT structure which is very useful in  \pchan.
(3)  Har-Peled's code created a cutting inside of a $1 \times 1$ box. This makes computing dual cuttings difficult as we first have to normalize the lines to lie in such a region, but computing the 
correct normalization quickly would require us to re-implement much of the \polyT algorithm. 
Ultimately we opted to build our $\eps$-cutting code from scratch rather than modify the previous code.  

Har-Peled~\cite{Har00} also reported the cutting constant (number of cells divided by $r^2$) for various of his algorithms, about $7.3$ (polygons) and $12.8$ (trapezoids).  This roughly matches the numbers we observe in  Figure \ref{fig:r4cuttingsizesmall} and Figure \ref{fig:cuttingsize}.  

Another option for computing cuttings and managing the partition trees is using the current 2d-arrangement implementation in CGAL \cite{wfzh-a2-18a}. This would have most likely made portions of this project much easier to implement and removed various hurdles. 
However, the possibilities of several factor slow-downs using exact precision would have potentially resulted in no ultimate gains in the spatial anomaly application demonstrated below.  

\section{Ham-Sandwich Tree}
\begin{figure}[t]
	\centering
	\vspace{-3mm}
	\includegraphics[width=.8\linewidth]{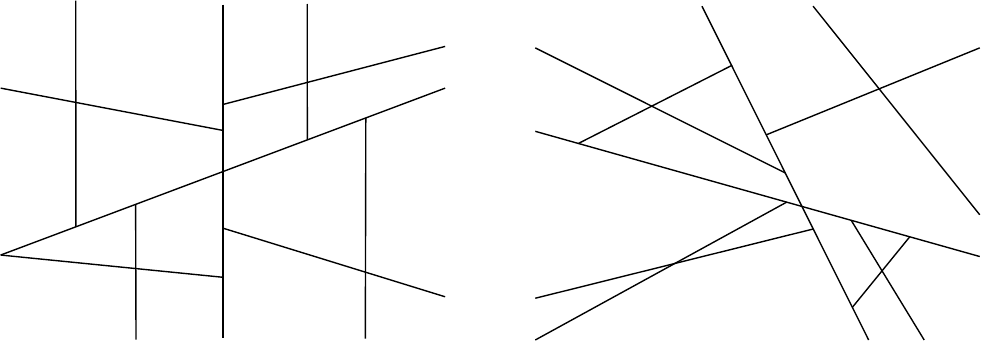}
	
	\caption{\Akd on the left (single vertical line and then ham-sandwich cut) and \Akdd on the right (ham-sandwich cut at every level).}
		\vspace{-3mm}
	\label{fig:ham}
\end{figure}
We implement two alternative partitioning methods by Willard~\cite{Wil82}, \Akd, and Edelsbrunner~\cite{EW86}, \Akdd. The first method, \Akd, provides a partitioning with $z=\log_3 4$, which gives a $O(\frac{1}{\eps^{1.657}} \log^{0.829} \frac{1}{\eps})$ sized sample, and constructs a tree with a branching factor of 4. The second method, \Akdd, provides a partitioning with $z=.695$ leading to a sample size of $O(\frac{1}{\eps^{1.533}} \log^{0.766} \frac{1}{\eps})$. Both methods are much simpler then the earlier given partitioning methods and potentially faster.

With \Akd at each level we split the point set in half with a single vertical line, and on these two resulting sets we find a single (roughly horizontal) line that divides both the left and right point set in half. Such a separator is guaranteed by the ham-sandwich theorem, and can be computed in linear time~\cite{M85}, but is complicated to implement. We instead approximate the ham-sandwich cut by computing a number of test lines and choosing the best separator from these. This is simple to implement, gives good cuts in practice, and can guarantee to be at most $\eps$-imbalanced \cite{Mat92e}. 

\Akdd differs in the way the levels are handled. We will substitute the vertical cut from above with a ham-sandwich cut. Starting with a pair of roughly equal sized point sets, each are cut in half using the ham-sandwich cut. We then recursively cut these resulting pairs again with more ham-sandwich cuts.
\begin{figure}[b]
	\vspace{-3mm}
	\includegraphics[width=0.46\linewidth]{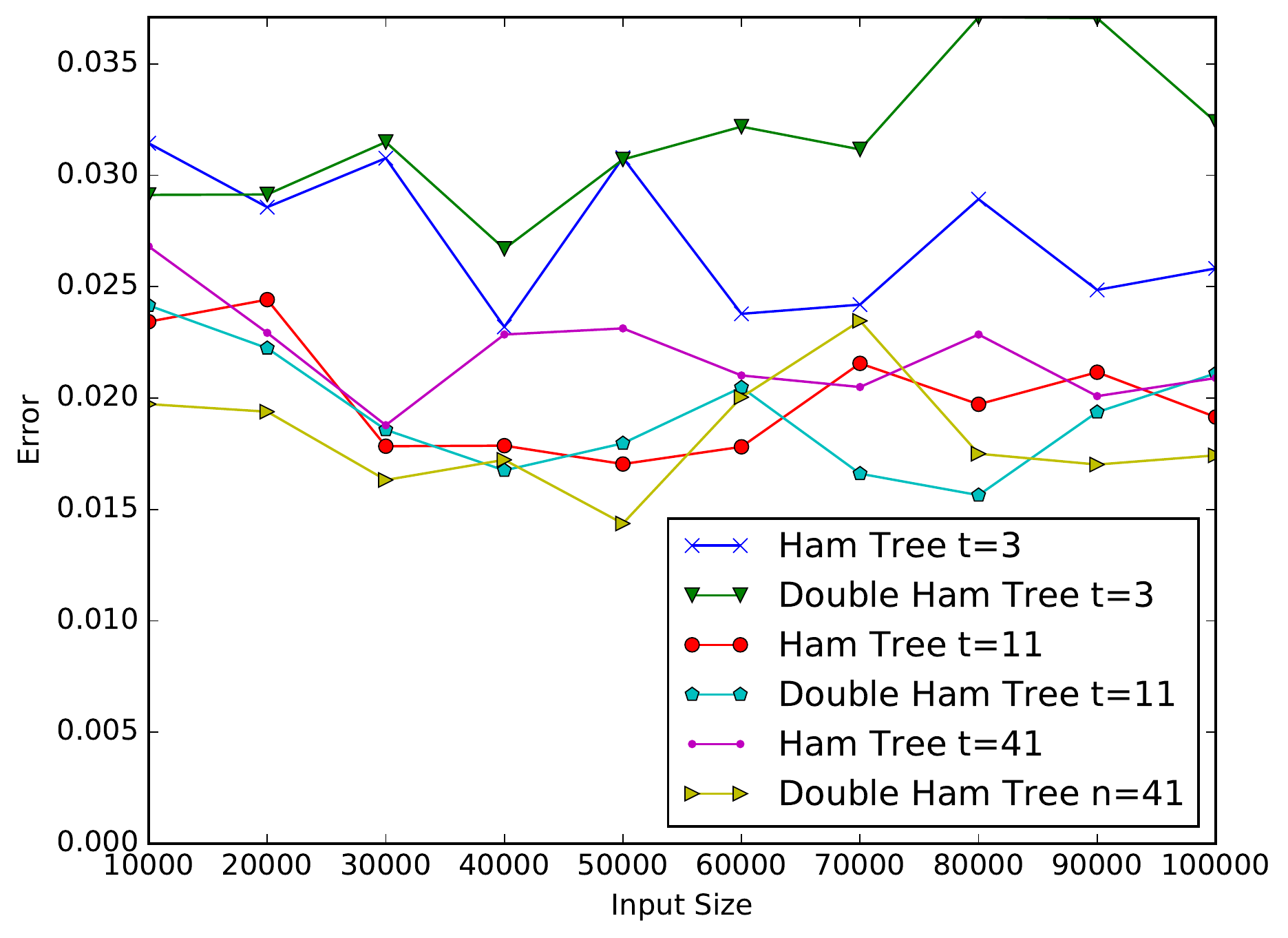}
	\hspace{3mm}
	\includegraphics[width=0.45\linewidth]{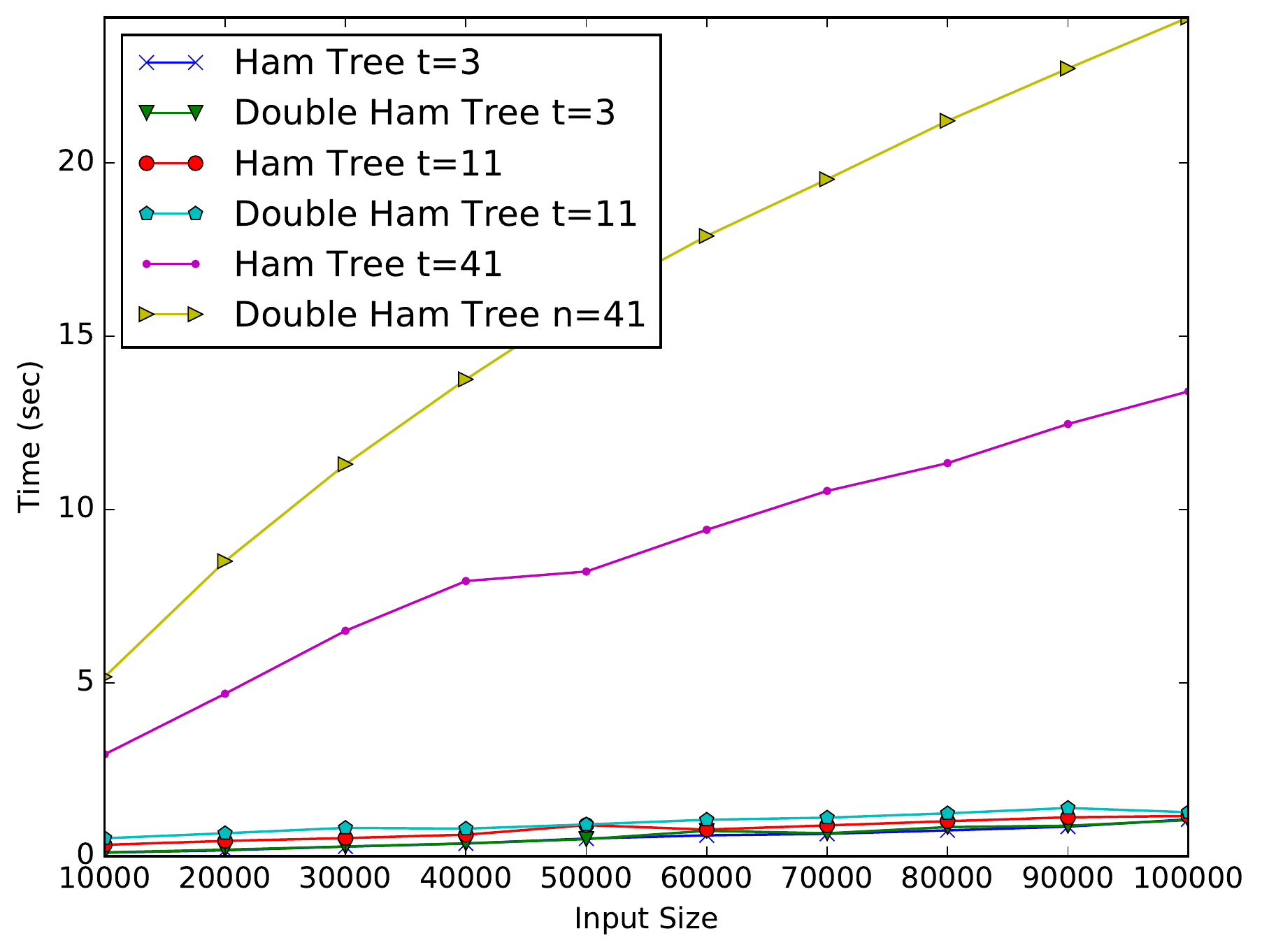}
	
	\vspace{-.18in}
	\caption{Input size vs. Time and Error for various $t$ with \Akd and \Akdd.}
	\label{fig:haminput}
\end{figure}

For both methods we compute the set of potential, approximate ham-sandwich cuts in the same way.  We randomly select $t$ points, and consider the separating line generated by those passing through each pair.  
We have experimented with the sample size $t$; as the number of test set lines increases we approximate the ham-sandwich cut with higher accuracy, but the cut also takes more time to compute.
We test various $t$ to determine a good compromise between accuracy and performance. We construct samples of the Chicago crime data~\cite{Chicago} with roughly 6.5 million data points. These experiments follow closely the setup from Section \ref{sec:exp}, see there for more details.
We find that in practice there is little difference between the accuracy and time that \Akdd and \Akd take, so in general we recommend the simpler \Akd.   We also find that smaller values of $t$ (at $t=11$) seem to provide accuracies nearly as good as large $t$ (at $t=41$), but are significantly faster.  However, quite small values of $t$ (at $t=3$) offer minimal speedup while having a minor, but observable increase in \error.  Ultimately, among these variants, we recommend \Akd with $t=11$, and this is the variant used in further experiments.

\begin{figure}[H]
	\vspace{-3mm}
	\includegraphics[width=0.46\linewidth]{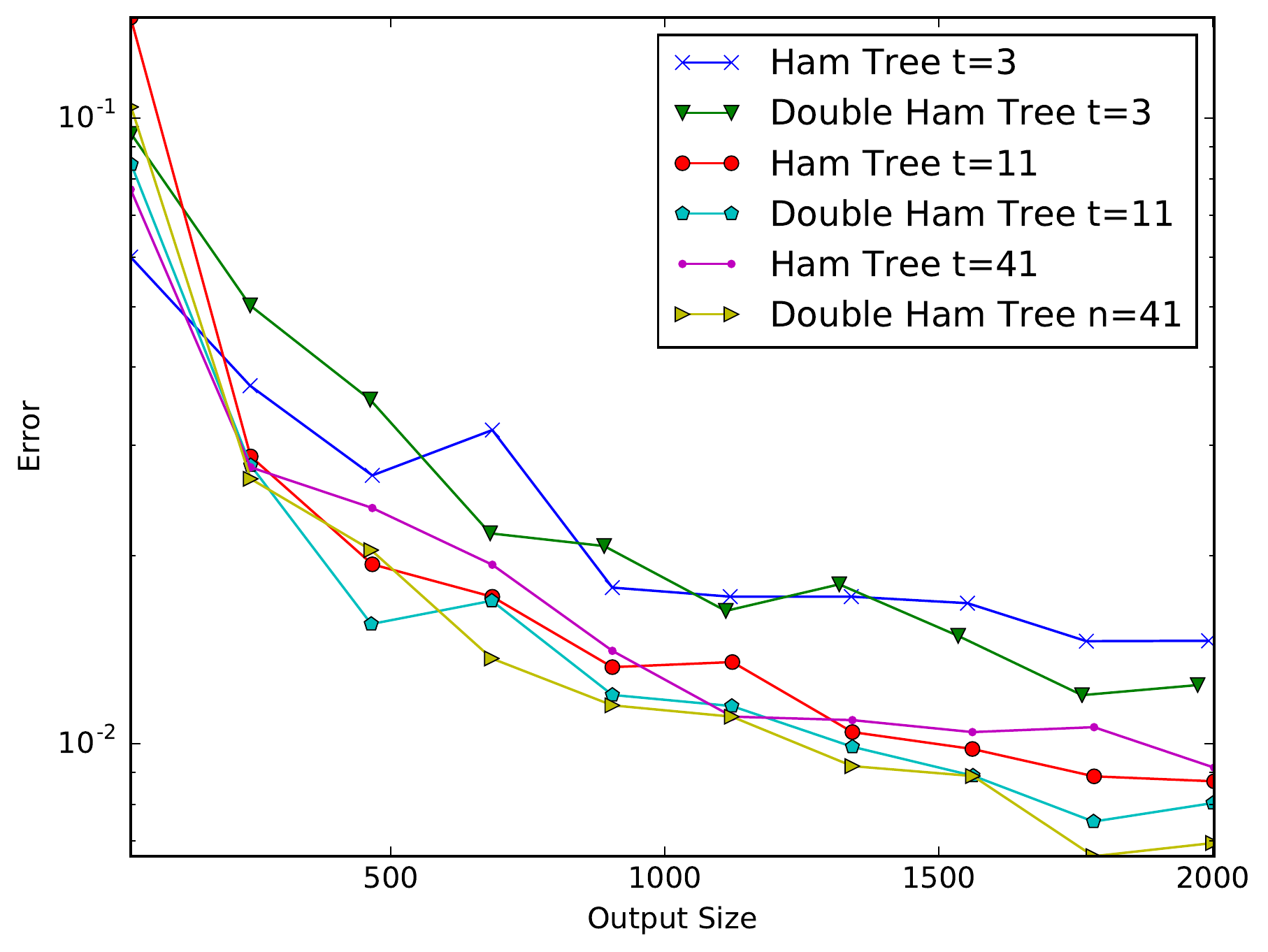}
	\hspace{3mm}
	\includegraphics[width=0.45\linewidth]{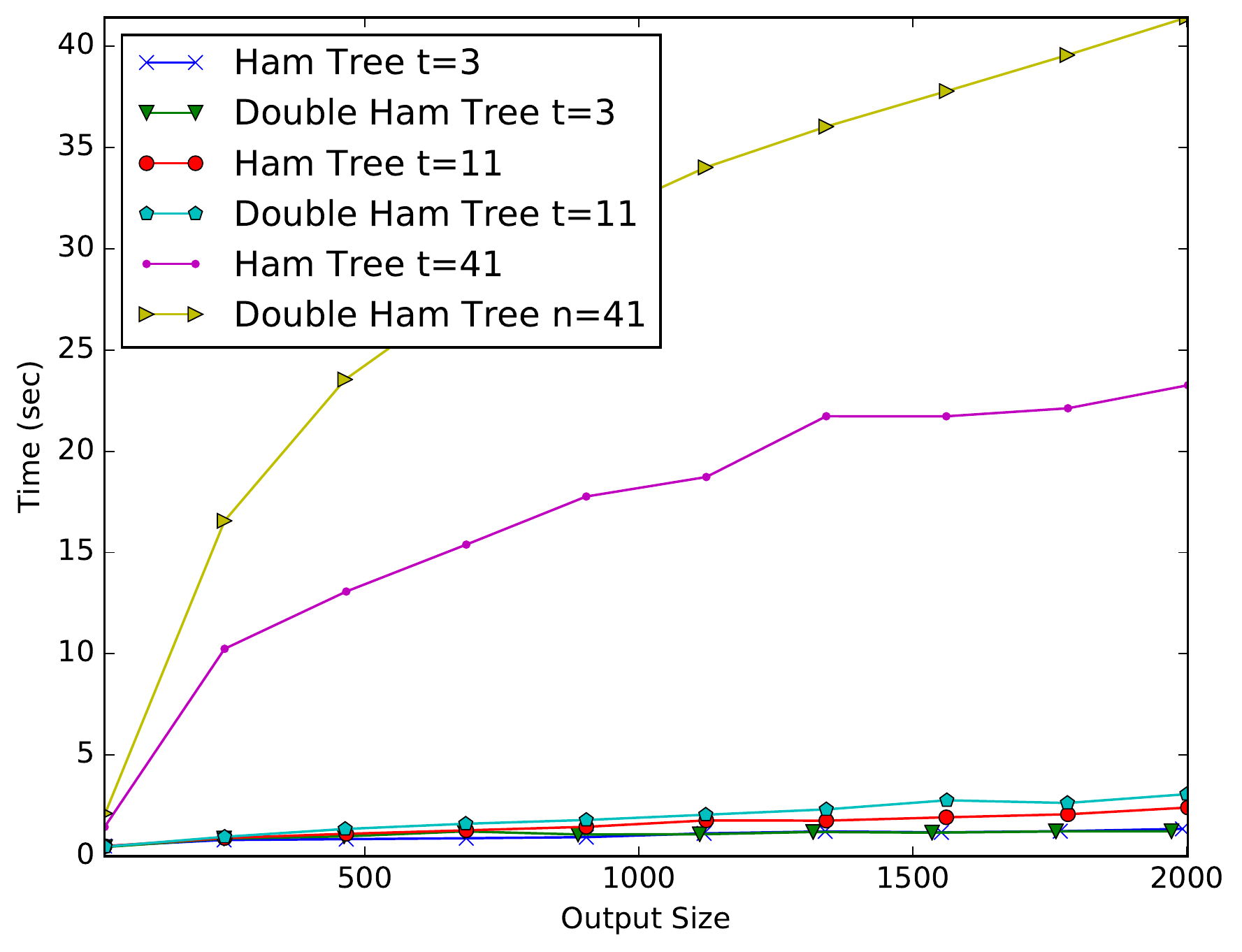}
	
	\vspace{-.18in}
	\caption{Output size vs. Time and Error for various $t$ with \Akd and \Akdd.}
	\label{fig:hamoutput}
\end{figure}

\section{Experiments on $\eps$-Samples and Applications}
\label{sec:exp}
In this section we explore the efficacy of our $\eps$-sample algorithms based on partitions.  
We use as $X$ the Chicago crime data~\cite{Chicago} with roughly 6.5 million data points.

A key step of the analysis is measuring the accuracy of the $\eps$-sample.  That is for a sampled $S$ we measure 
$\error(X,Y) = \max_{h \in \c{H}_d} | \frac{|Y \cap h|}{|Y|} - \frac{|X \cap h|}{|X|} |$, which unfortunately requires $|X|^{d+1}$ time to simply enumerate, which would be infeasible for large $X$.  
Instead we use techniques~\cite{SSSS,ABM06,Scanning} which provide guaranteed approximation of this function, designed with spatial anomaly detection in mind.  We have set the parameters large enough so the noise in computing \textsf{Error} is insignificant compared to the quantities we are evaluating.  
We evaluated the accuracy and efficiency of computing $\eps$-samples with $8$ different methods.  
\begin{itemize} 
\item There are 3 algorithms based on sampling one element per cell from Matousek's partition algorithm with polygonal cells, using tests created by lines \AMatL, points \AMatP, or the dual approach \AMatD.  
\item We consider 2 algorithms based on Chan's partition algorithm \AChan and \AChanS.  Each uses polygonal cells and the dual approach for the test set since this specific type of test set allowed for an optimization in the reweighting step.  The \AChan variant includes subsampling among cells for purpose of reweighting, while the \AChanS simply uses all of these cells and does not require the sampling step which in practice was inefficient.  
\item Then \Akd draws samples from the cells of the Willards partitioning; these are also cells of a partition, but with worse theoretical size-accuracy bounds.  
\item Finally we consider two baselines: random sampling, \ARS, and another approach \AEASE~\cite{ABM06} which is a greedy, but slow algorithm which has similar worst case guarantees to random sampling, but achieves better error in practice.  
\end{itemize}

\begin{figure}[t]
	\includegraphics[width=0.46\linewidth]{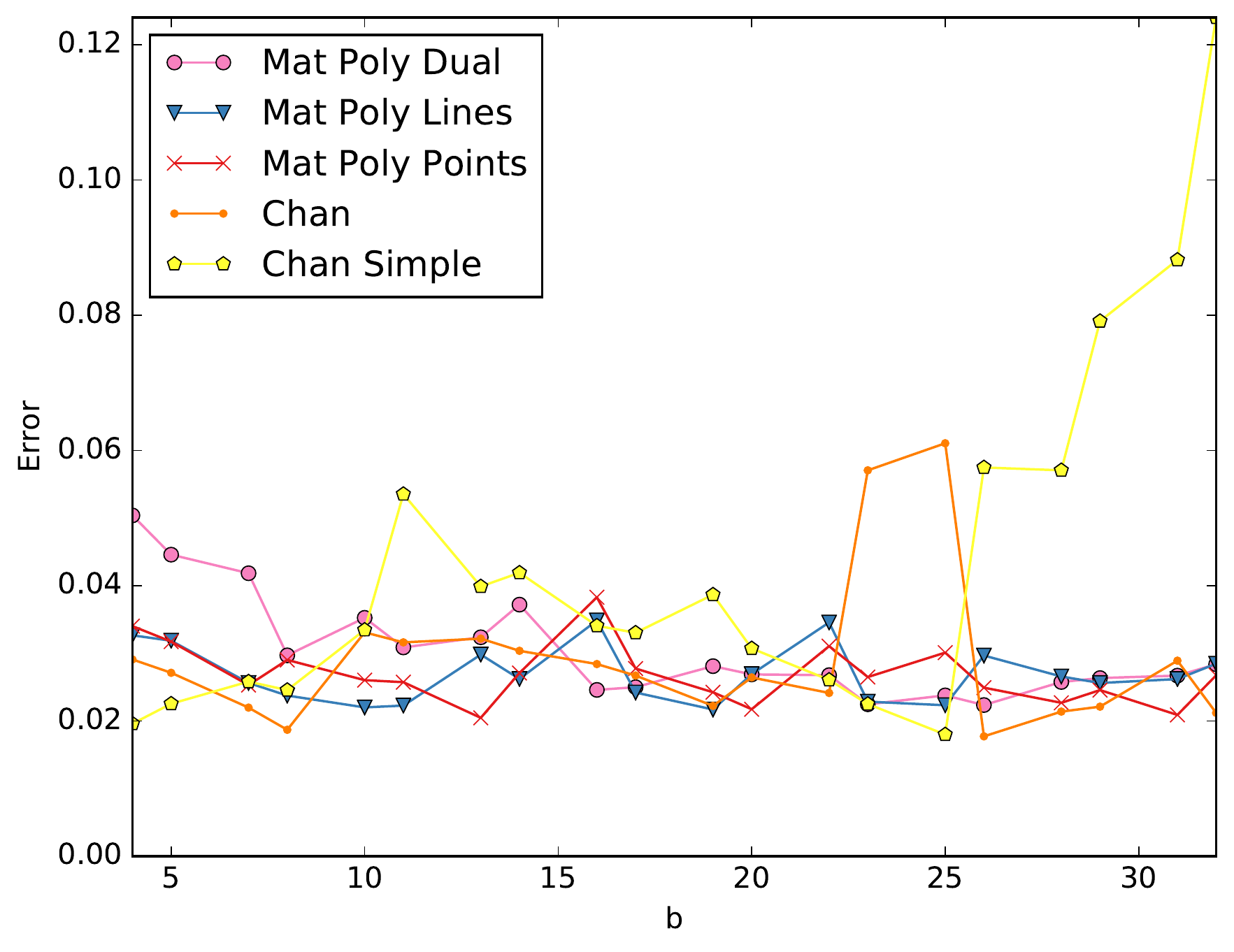}
	\hspace{3mm}
	\includegraphics[width=0.45\linewidth]{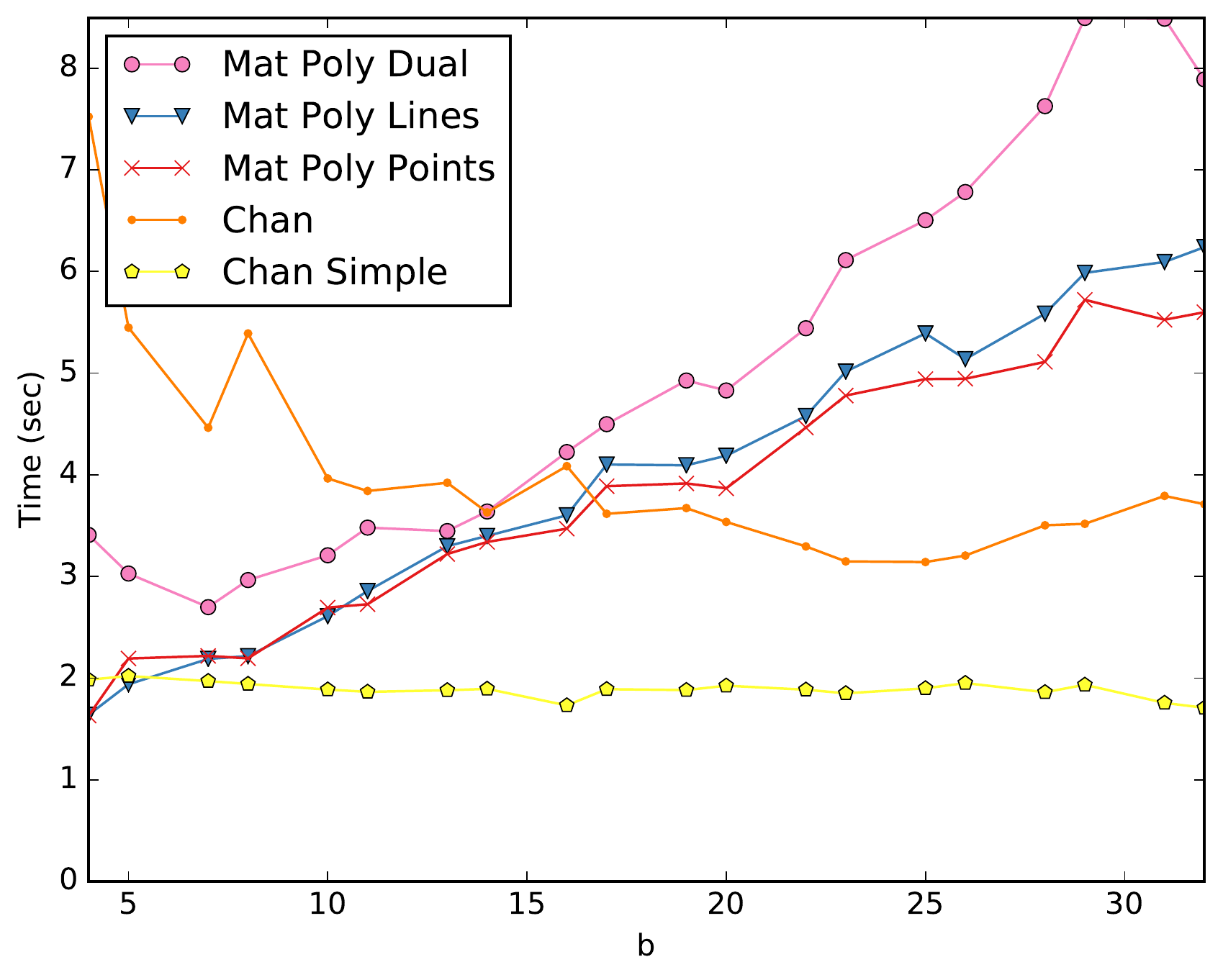}
	
	\vspace{-.18in}
	\caption{The \branch $b$ vs. Time and Error using the default parameters.}
	\label{fig:b_small}
\end{figure}

In testing these algorithms we can control three parameters: 
the \branch $b$, 
the \inS $n$ (default $n=100{,}000$ sampled from the crime data set), and 
the \out $k$ (default $k=1{,}000$).   We do not create a sample before creating the partition as analyzed in Theorem \ref{thm:halfplanesample}; we just create the partition on the $n$ points, then sample a point from each cell for the $\eps$-sample.  
The branching factor only effects \pmat (default $b=16$) and \pchan (default $b=22$) and is constant for the execution of the algorithm. 
\begin{figure}[t]

    \vspace{-3mm}
	\includegraphics[width=0.46\linewidth]{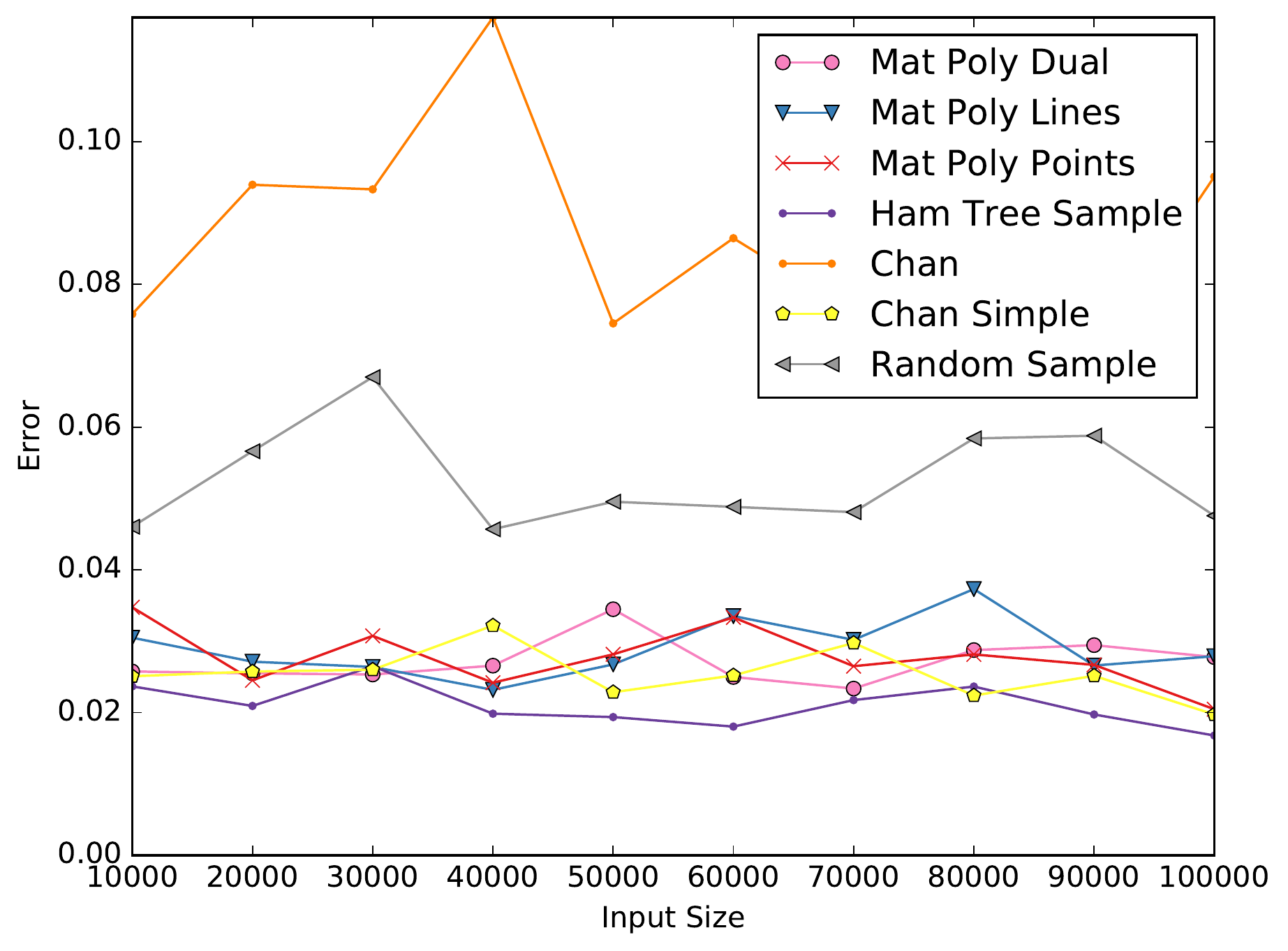}
	\hspace{3mm}
	\includegraphics[width=0.45\linewidth]{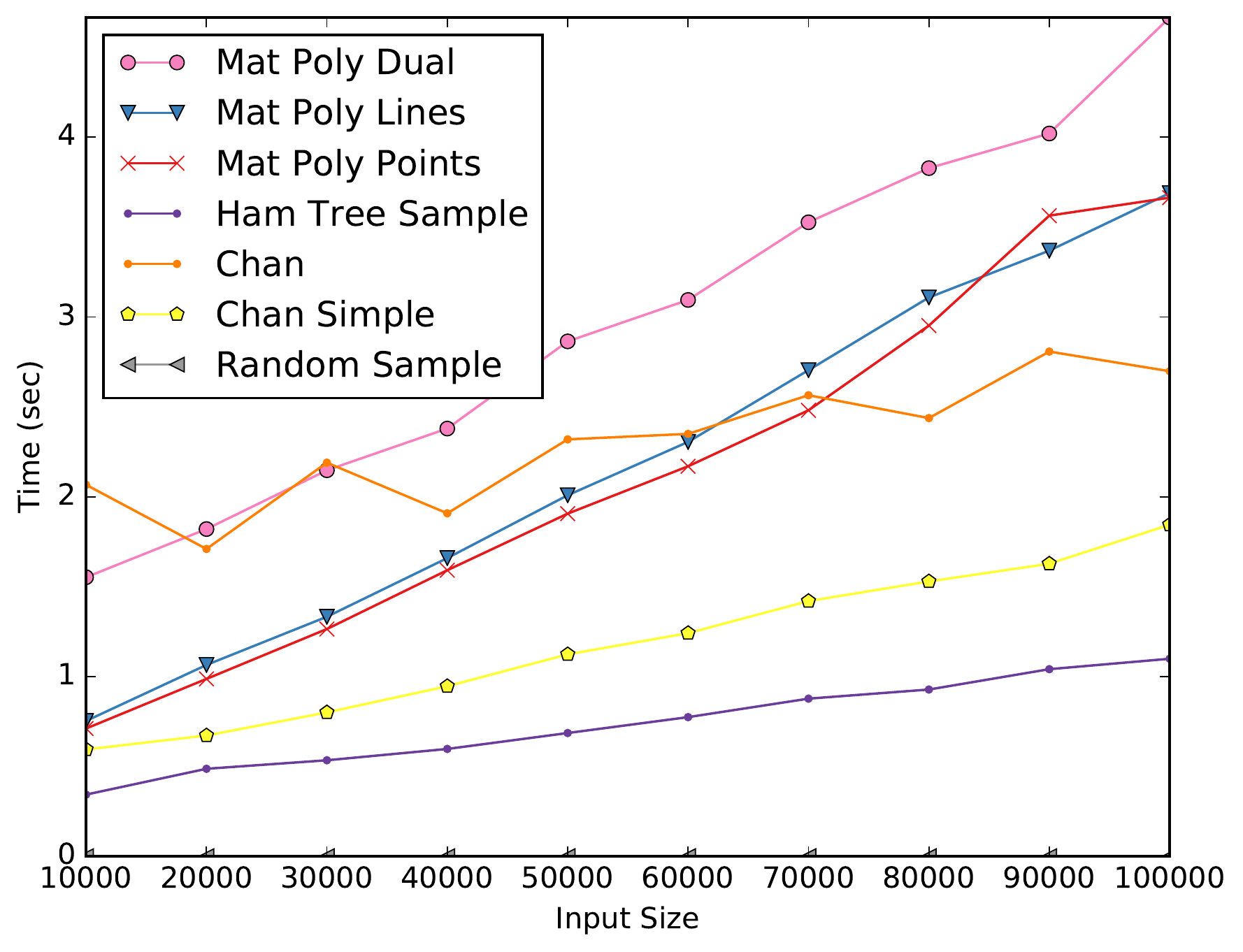}

	\vspace{-.18in}
	\caption{Input size vs. Time and Error using the default parameters.}
	\label{fig:is_small}
\end{figure}

\begin{figure}[t]
	\includegraphics[width=0.46\linewidth]{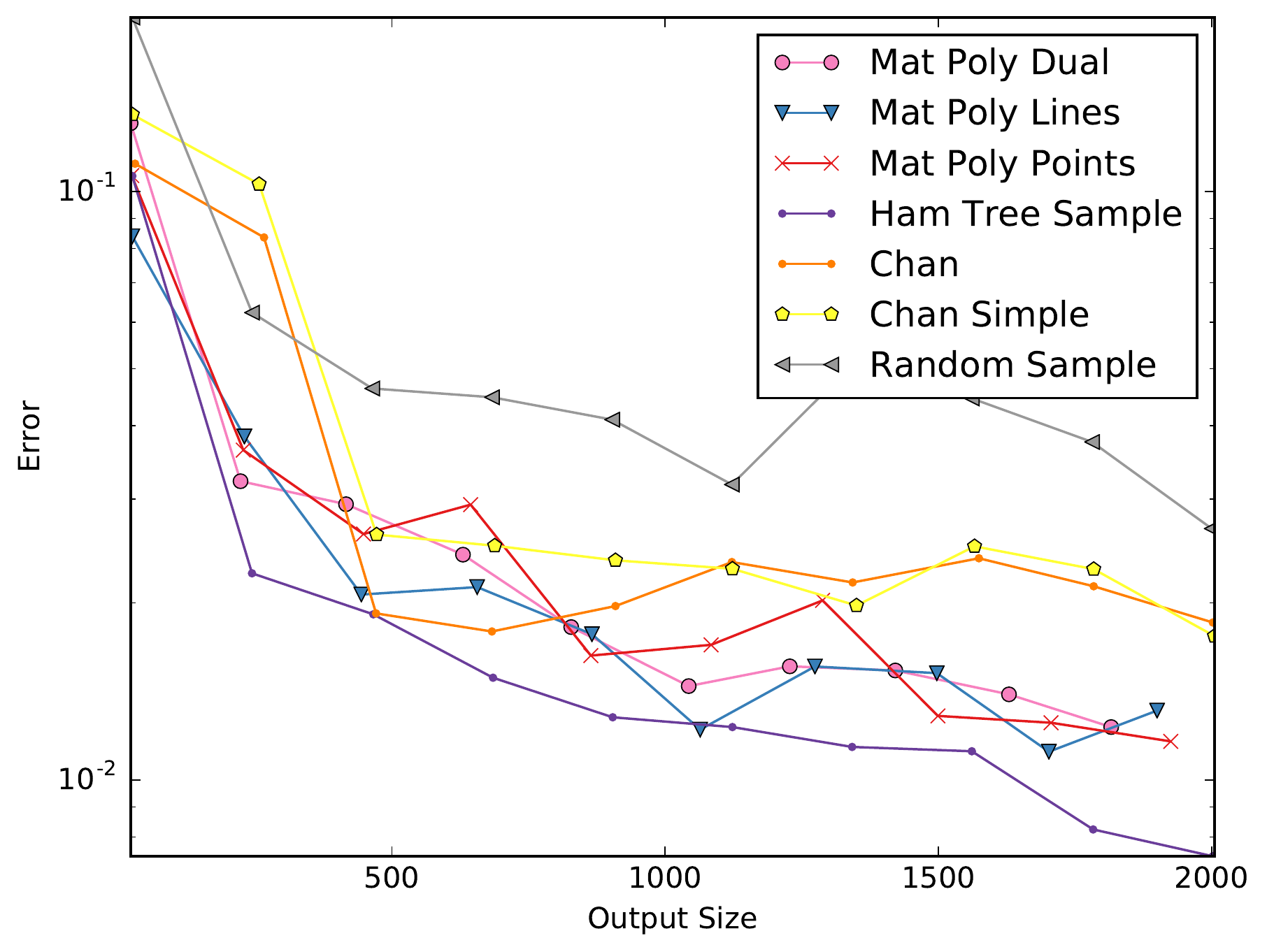}
	\hspace{3mm}
	\includegraphics[width=0.45\linewidth]{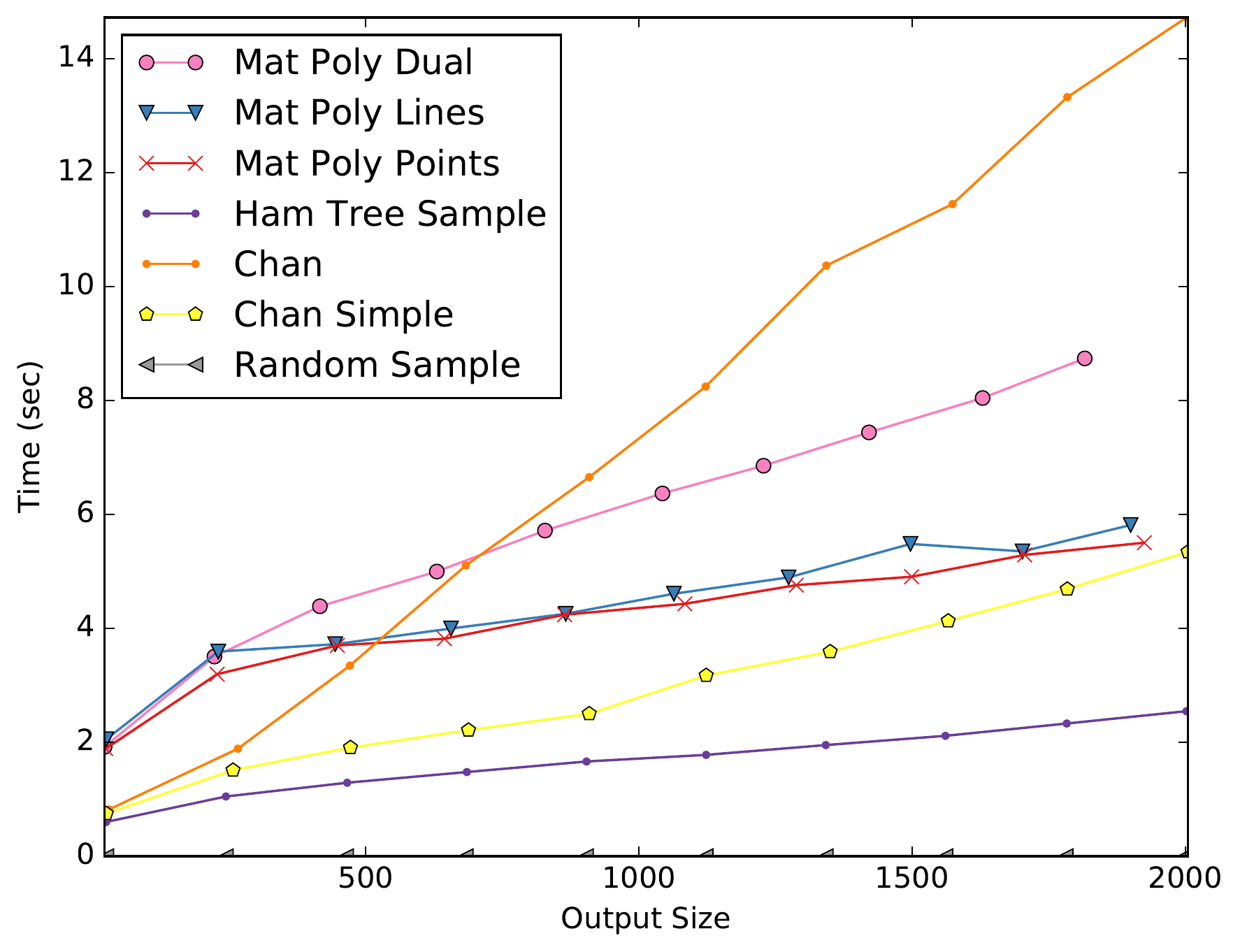}

	\vspace{-.18in}
	\caption{Output size vs. Time and Error using the default parameters. \vspace{-.1in}}
	\label{fig:os_small}
\end{figure}

\subparagraph*{Sample Evaluation Results.}
We do not plot \AEASE since it was quite slow as a function of the \out.  For $k=51$ it required $360$ seconds which was already more than a factor $100\times$ slower than any other algorithm, and became nearly intractable for $k > 100$.  We do note however that its measured \error on small $k$ is competitive with the best of our partitioning based methods.

Figure \ref{fig:b_small} shows how \branch $b$ affects the time and error. 
Matou\v{s}ek-based algorithms seem to gradually decrease in \error, but the trend is very small.  For \AChanS, the \error encounters a phase shift at around $b=25$, where the error suddenly becomes significantly worse for larger $b$, probably as an effect of the data set size.  
The timing is fairly unaffected by $b$ for Chan-based algorithms, but increases noticeably and linearly for the Matou\v{s}ek based algorithms.  We conclude that $b=22$ is a good choice for Chan-based algorithms and $b=16$ is a good choice for Matou\v{s}ek based algorithms.

Figure \ref{fig:is_small} shows the \inS relationship to time and \error.  As prescribed by the theory, \inS has no noticeable effect on \error.  Moreover, also as expected \textbf{the runtime of all algorithms scale linearly with \inS}.

Figure \ref{fig:os_small} plots the \out against the time and \error. 
As \out increases, as expected the error for all methods decreases, note that \error is plotted on a log-scale.  The \Akd and \AMatL achieve the smallest \error, with \Akd doing the best, and all proposed methods appear to improve upon the old default \ARS in terms of \error.  In particular with \out $k=1000$ both \AMatP and \Akd have $\error \approx 0.01$ while \ARS has $\error \approx 0.04$.  For the Matou\v{s}ek-based partitioning algorithms, the choice of test set does not have much effect on \error, and perform slightly worse than those based on Chan's partitioning.

Moreover, as \out increases the observed run time of all algorithms increases at most linearly.  In some cases (e.g., \Akd and \AMatL) the increase is sublinear as these are hierarchical methods, and the largest cost is incurred at the top of the hierarchy.  
Here as in other plots, we observe that \ARS is absurdly faster than any other approach.   However, even for \out $k=1000$, our methods \Akd, \AChanS, and \AMatP take only about $1$, $2.5$, and $4$ seconds, respectively.

\subparagraph*{Spatial Anomaly Detection Evaluation.}
As a concrete demonstration of the usefulness of efficient $\eps$-samples in practice, we apply our new algorithms to a framework for approximately detecting spatial anomalies -- maximizing the spatial scan statistic~\cite{Kul97}.  Specifically each point is endowed with two measures ($b(x)$ the baseline quantity like population and $m(x)$ the measured quantity like disease instance), and let $m(h)$ and $b(h)$ be the fraction of all measured and baseline counts within range $h \in \c{H}_d$, respectively.  The main computational problem of exact scan statistics is to find   
$h^* = \mathrm{arg}\max_{h \in \c{H}_d} \Phi(h)$ 
where for simplicity we use $\Phi(h) = \phi(m(h),b(h)) = |m(h)-b(h)|$.  

\begin{wrapfigure}{r}{0.49\linewidth}
	
	\vspace{-.15in}
	\includegraphics[width=\linewidth]{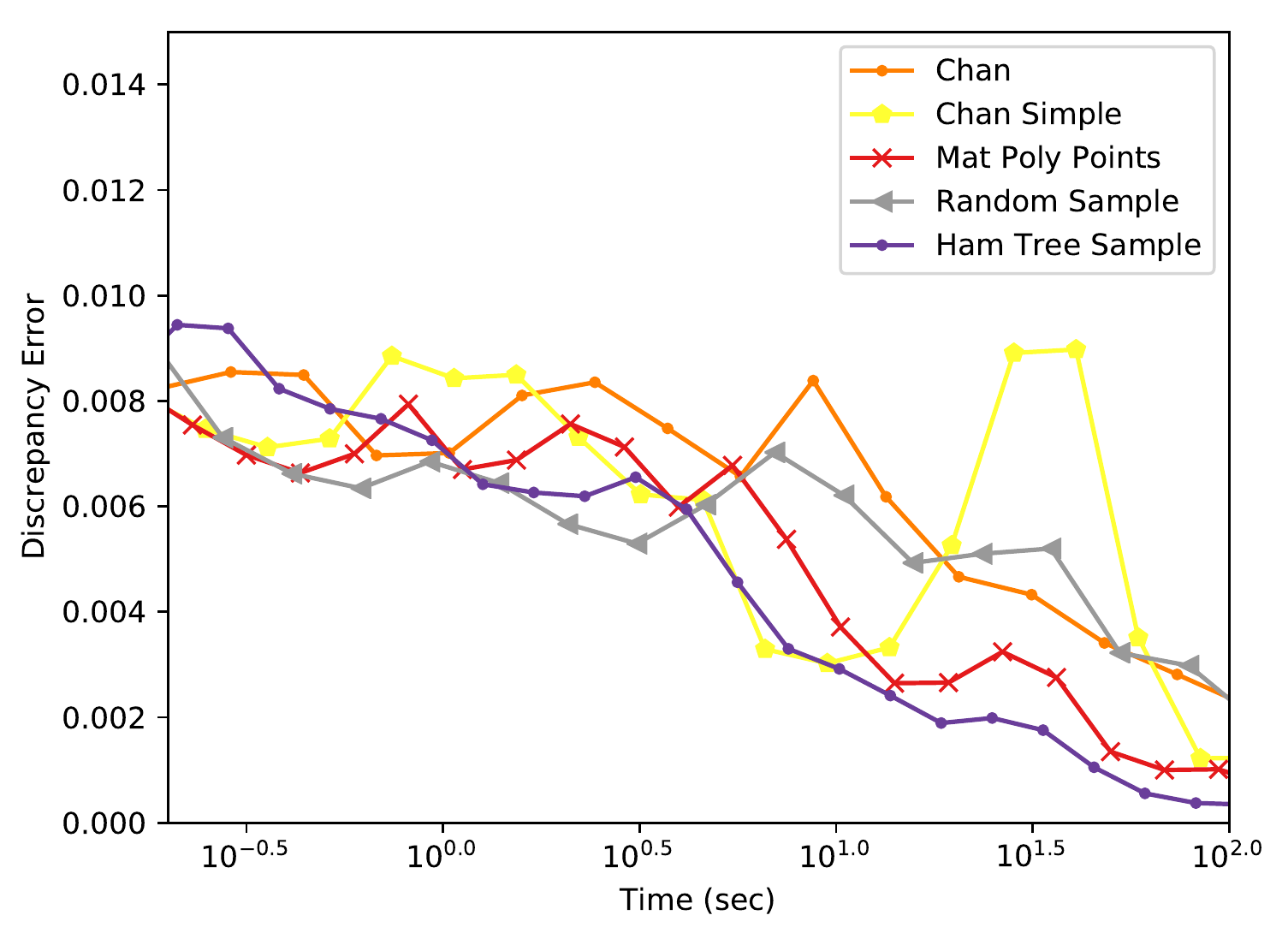}
	
	\vspace{-.16in}
	\caption{\label{fig:disc} Smooth \textsf{Discrepancy Error} vs. time.}
	\vspace{-.20in}
\end{wrapfigure}

Approximate scan statistics~\cite{SSSS,Scanning} depend on creating two samples an $\eps$-net which approximates the density of the regions and an $\eps$-samples which approximates the density of points. Together this allows the algorithm to find a 
$\hat h$ where $|\Phi(\hat h) - \Phi(h^*)| \leq \eps$; and this is still statistically powerful~\cite{SSSS}.  
In particular, we consider an algorithm for $\hat h$ which runs in time $O(n + \frac{1}{\eps} k \log \frac{1}{\eps} + T(n,k))$, where $k$ is the $\eps$-sample size and $T(n,k)$ its construction time. We fix $\eps$ to be approximately $.0025$ which corresponds approximately to an $\eps$-net of size $400$. and vary only $k$. `
We find approximate anomalies on the crime data set with a particular $h' \in \c{H}_d$ chosen and points chosen, so that $\Phi(h')$ will be anomalously large. Namely we plant a region containing $.02$ fraction of the points, where in that region points are in the measured set with probability of $.7$ and baseline set of $.3$ and outside with probability $.5$ and $.5$ respectively. In Figure \ref{fig:disc} we plot $\textsf{Discrepancy Error} = |\Phi(\hat h) - \Phi(h')|$ as a function of the overall runtime of the algorithms. Note that $\Phi(h^*) \ge \Phi(h')$, so it is possible to find a $\Phi(\hat h) \ge \Phi(h')$, but $\Phi(h')$ serves as a useful proxy. We find that \Akd generally outperforms \ARS; for instance for $0.003$ error, \Akd takes $10$ seconds to \ARS's $50$ seconds.  \AMatP also usually performs better than \Akd, while \AChan and \AChanS perform comparably to random sampling, albeit with high variance, even though their sampling procedure is hundreds of times slower.

\subparagraph*{Conclusion.}
Overall we recommend \Akd for computing $\eps$-samples if moderate computing beyond random sampling can be tolerated.  This method significantly reduces the size and error versus random sampling, and is not difficult to implement. If post-processing is not extensive, \ARS is still a simple reasonable choice in many settings. 

One thing to consider when implementing these algorithms is whether the extra complexity of a method such as \AChan is necessary. \AChan creates a tree structure that is very similar to \Akd or \Akdd, but differs in that it switches between partitioning based on points, \Akd, and partitioning based on lines, cuttings. Maintaining the global information for the cuttings leads to the much higher overhead of this method, but is only really necessary to avoid the possible construction of bad partitions. In most situations \Akd will perform just as well or better, since the data set will not be adversarial. For instance if the point set is uniformly distributed then even a $kd$-tree constructed partitioning will give an optimal $z=\frac{1}{2}$ \cite{Mat94}. In applications where consistently small sample sizes are extremely important and data can be manipulated by a 3rd party then guarantees become necessary.

\newpage
\bibliography{discrepancy}

\end{document}